\newcommand{\diag}{\operatorname{diag}}
\newcommand{\rk}{\operatorname{rank}}
\newtheorem{theorem}{Theorem}
\newtheorem{example}[theorem]{Example}
\newtheorem{lemma}[theorem]{Lemma}
\newtheorem*{lemma*}{Lemma}
\newtheorem{proposition}[theorem]{Proposition}
\newtheorem{corollary}[theorem]{Corollary}
\newtheorem{remark}[theorem]{Remark}
\newtheorem*{open*}{Open~question}
\DeclareMathOperator{\Ima}{Im}
\begin{document}

\title{On the Uniqueness and Computation of Commuting Extensions}

\author{Pascal Koiran\footnote{Univ Lyon, EnsL, UCBL, CNRS,  LIP, F-69342, LYON Cedex 07, France. 
Email: {\tt pascal.koiran@ens-lyon.fr}.}}

\maketitle

\begin{abstract}
A tuple $(Z_1,\ldots,Z_p)$  of matrices of size $r$ is said to be a {\em commuting extension} of a tuple $(A_1,\ldots,A_p)$ of matrices of size $n <r$ if the $Z_i$ pairwise commute 
and each $A_i$ sits in the upper left corner of a block decomposition of $Z_i$.
This notion was discovered and rediscovered in several contexts including algebraic complexity theory (in Strassen's work on tensor rank), in 
numerical analysis for the construction of cubature formulas  and  in quantum mechanics for the study of computational methods and the study of the so-called "quantum Zeno dynamics."
 Commuting extensions have also attracted the attention of the linear algebra community.
 In this paper we present 3 types of results:
\begin{itemize}
\item[(i)] Theorems on the uniqueness of commuting extensions for three matrices or more.
\item[(ii)] Algorithms for the computation of commuting extensions of minimal size. These algorithms work under the same assumptions
as our uniqueness theorems. They are applicable up to $r=4n/3$, and are apparently the first provably efficient algorithms for this problem applicable beyond $r=n+1$.
\item[(iii)] A genericity theorem showing that our algorithms and uniqueness theorems can be applied to a wide range of
input matrices.
\end{itemize}
\end{abstract}

\newpage

\section{Introduction}

A tuple $(Z_1,\ldots,Z_p)$  of matrices in $M_r(K)$ is said to be a {\em commuting extension} of a tuple $(A_1,\ldots,A_p)$ of matrices in $M_n(K)$ if the $Z_i$ pairwise commute 
and each $A_i$ sits in the upper left corner of a block decomposition of  $Z_i$, i.e.,
\begin{equation} \label{eq:block}
Z_i=
\begin{pmatrix}
A_i & B_i \\
C_i & D_i
\end{pmatrix}
\end{equation}
for some matrices $B_i \in M_{n,r-n}(K)$, $C_i \in M_{r-n,n}(K)$ and $D_i \in M_{r-n}(K)$.
Here we denote by $M_{r,s}(K)$ the set of matrices with $r$ rows, $s$ columns and entries from a field $K$.
Also, $M_r(K)=M_{r,r}(K)$ and $GL_r(K)$ denotes as usual the group of invertible matrices of size $r$.

This notion was discovered and rediscovered in several contexts including algebraic complexity theory (in Strassen's work on tensor rank~\cite{strassen83}, see also~\cite{koiran20commuting} and the references therein),
in numerical analysis for the construction of cubature formulas~\cite{degani05} and in  quantum physics for the study of computational methods~\cite{degani06} or the study of the so-called "quantum Zeno dynamics"~\cite{burgarth14,orsucci15}. 
Unsurprisingly, 
commuting extensions have also attracted the attention of the linear algebra community~\cite{KSW09,KW08}. The term {\em commuting extension} was apparently coined in~\cite{degani05} (in \cite{KSW09,KW08} the term  {\em commuting completion} is used instead).

Given a tuple $(A_1,\ldots,A_p)$ of matrices of size $n$ and an integer $r \geq n$ we would like to know if a commuting extension $(Z_1,\ldots,Z_p)$ of size $r$ exists, if it is unique and how to compute it efficiently.
Strictly speaking, commuting extensions are never unique~\cite{degani05}. For $M \in GL_{r-n}(K)$, consider indeed the map $\rho_M:M_r(K) \rightarrow M_r(K)$ which sends
$Z =
\begin{pmatrix}
A & B\\
C & D
\end{pmatrix}$ to
 \begin{equation} \label{eq:action}
 \rho_M(Z) =
 \begin{pmatrix}
 I_n & 0\\
 0 & M
 \end{pmatrix}^{-1} Z \begin{pmatrix}
 I_n & 0\\
 0 & M
 \end{pmatrix} =
\begin{pmatrix}
A & BM\\
M^{-1}C & M^{-1}DM
\end{pmatrix}\end{equation}
where $I_n$ denotes the identity matrix of size $n$.
If $(Z_1,\ldots,Z_p)$ is a commuting extension of $(A_1,\ldots,A_p)$ then so is $(\rho_M(Z_1),\ldots,\rho_M(Z_p))$.
This follows immediately from the identity  
$\rho_M(ZZ')=\rho_M(Z)\rho_M(Z')$.
Let us say that a commuting extension of size~$r$ is {\em essentially unique} if it is unique up to this $GL_{r-n}$  action.

\subsection{Our results} 

In this paper we present three types of results:
\begin{itemize}
\item[(i)] Theorems on the essential uniqueness of commuting extensions for three matrices or more.
\item[(ii)] Algorithms for the computation of commuting extensions of minimal size. These algorithms work under the same assumptions
as our uniqueness theorems. They are applicable up to $r=4n/3$, and are apparently the first provably efficient algorithms for this problem going beyond $r=n+1$.
\item[(iii)] A "genericity theorem" on the applicability of our algorithms and uniqueness theorems.
\end{itemize}
In this introduction we give a precise statement only  for the uniqueness theorem for 3 matrices, which takes a fairly simple form:
\begin{theorem} \label{th:3unique}
Consider a tuple  $(A_1,A_2,A_3)$ of matrices of size~$n$ with entries in a field $K$ such that:
\begin{itemize}
\item[(i)] The three linear spaces $\Ima [A_1,A_2]$, $\Ima [A_1,A_3]$, $\Ima[A_2,A_3]$ are of dimension $2(r-n)$.
\item[(ii)] The three linear spaces $\Ima [A_1,A_2]+\Ima [A_1,A_3]$, $\Ima [A_2,A_1]+\Ima [A_2,A_3]$ and 
$\Ima [A_3,A_1]+\Ima [A_3,A_2]$ are of dimension $3(r-n)$.
\end{itemize}
The tuple $(A_1,A_2,A_3)$ does not have any commuting extension of size less than $r$.
If $(A_1,A_2,A_3)$ has a commuting extension of size $r$, it is essentially unique. Moreover, if a commuting extension of size $r$ exists
in the algebraic closure $\overline{K}$,  there is already one in the ground field $K$.  
\end{theorem}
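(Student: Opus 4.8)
The plan is to show that the two hypotheses force almost all of the data of a commuting extension to be rigid, so that essential uniqueness and descent to $K$ come out of one and the same determination argument. Write $d:=r-n$. Equating the upper-left blocks of $Z_iZ_j=Z_jZ_i$ gives the commutator identity
\[
[A_i,A_j]=B_jC_i-B_iC_j ,
\]
hence $\Ima [A_i,A_j]\subseteq\Ima B_i+\Ima B_j$. Since $B_i\in M_{n,d}(K)$ has rank at most $d$, hypothesis~(i) forces, for every pair, $\rk B_i=d$, $\Ima B_i\cap\Ima B_j=0$ and $\Ima [A_i,A_j]=\Ima B_i\oplus\Ima B_j$; a rank count in the same identity ($2d=\rk([A_i,A_j])\le\rk(B_jC_i)+\rk(B_iC_j)\le d+d$) then shows that every $C_i$ has rank $d$, i.e.\ is surjective. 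This already settles impossibility: a commuting extension of size $r'<r$ would have $d':=r'-n<d$ and hence $\dim\Ima [A_i,A_j]\le 2d'<2d$, against~(i). Bringing in hypothesis~(ii), $\Ima [A_i,A_j]\cap\Ima [A_i,A_k]$ has dimension $d$; as it contains the $d$-dimensional space $\Ima B_i$, we obtain the key rigidity fact
\[
V_i:=\Ima B_i=\Ima [A_i,A_j]\cap\Ima [A_i,A_k] ,
\]
which depends on $(A_1,A_2,A_3)$ only.

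Next I would bootstrap from $V_1,V_2,V_3$ to the whole extension. Fix projections $P_{ij}\colon K^n\to K^n$ onto $V_i$ with $V_j\subseteq\ker P_{ij}$ (possible since $V_i\cap V_j=0$, and one may take them with entries in $K$). Applying $P_{ij}$ to the commutator identity kills $B_jC_i$ and fixes $B_iC_j$, so $B_iC_j=-P_{ij}[A_i,A_j]$; applying the analogous projections to the upper-right block relation $B_iD_j-B_jD_i=A_jB_i-A_iB_j$ gives $B_iD_j=P_{ij}(A_jB_i-A_iB_j)$. For two commuting extensions $(Z_1,Z_2,Z_3)$ and $(Z_1',Z_2',Z_3')$ one has $\Ima B_i'=V_i=\Ima B_i$, hence $B_i'=B_iM_i$ for a unique $M_i\in GL_d(K)$. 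Since $B_i'C_j'$ and $B_iC_j$ are the same matrix ($=-P_{ij}[A_i,A_j]$) and $B_i$ is injective, $M_iC_j'=C_j$ for all $i\neq j$; comparing the two admissible $i$ against a fixed surjective $C_j'$ forces $M_1=M_2=M_3=:M$. Feeding this back into $B_i'C_j'=B_iC_j$ and $B_i'D_j'=B_iD_j$ yields $C_i'=M^{-1}C_i$ and $D_i'=M^{-1}D_iM$, i.e.\ $Z_i'=\rho_M(Z_i)$: essential uniqueness.

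For the descent I would rerun the determination "absolutely". Suppose a commuting extension of size $r$ exists over $\overline K$. Since $V_1$ is cut out by linear equations with coefficients in $K$, its base change to $\overline K$ is the column space of a fixed $E_1\in M_{n,d}(K)$; write $B_1=E_1R$ with $R\in GL_d(\overline K)$ and replace the extension by $\rho_{R^{-1}}$ of it, so that afterwards $B_1=E_1$ has entries in $K$. Now the determination steps apply in order — first $C_2$ and $C_3$ (solving $E_1X=(\text{a known matrix over }K)$ with $E_1$ injective), then $B_2$ and $B_3$ (solving $XC_3=(\text{a known matrix over }K)$ with $C_3$ surjective), then $C_1$, then $D_1,D_2,D_3$ — each being the unique solution of a $K$-linear system that is known to be consistent over $\overline K$, hence consistent over $K$ with the same unique solution. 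So the normalized extension already has all entries in $K$.

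The real work is not any single computation but arranging the order of deduction: the structural identities only ever yield products $B_iC_j$ or $B_iD_j$, so one must know beforehand that the relevant $B_i$ are injective and the relevant $C_j$ surjective — precisely the rank bookkeeping that hypothesis~(i) supplies. Once the dependency chain ($V_i$, then $C_2,C_3$, then $B_2,B_3$, then $C_1$, then the $D_i$) is laid out, uniqueness and descent are the same argument read over $K$ and over $\overline K$.
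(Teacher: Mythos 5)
Your argument is correct and follows essentially the same route as the paper: the same commutator identity $[A_i,A_j]=B_jC_i-B_iC_j$, the same rank bookkeeping forcing the direct sums, the same identification $\Ima B_i=\Ima[A_i,A_j]\cap\Ima[A_i,A_k]$, and the same determination of the remaining blocks as unique solutions of linear systems (your projections $P_{ij}$ are an explicit form of the paper's direct-sum decomposition lemma, and your descent argument is exactly the paper's observation that the whole construction is rational over $K$). The only organizational difference is that you carry all three ambiguities $M_1,M_2,M_3$ and show they coincide using surjectivity of the $C_j$, whereas the paper first normalizes $B_1'=B_1$ and then proves the twelve block equalities one at a time in a fixed order.
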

Here we use the standard notation $[A,B ]=AB-BA$ for the commutator of two matrices, and we denote by $\Ima A$ the linear span of the columns of~$A$ (hence $\dim(\Ima A) = \rk A$). Note that the second hypothesis in this theorem can only be satisfied
when $r \leq 4n/3$. This restriction applies to all of our main results.

 It was pointed out in~\cite{degani05} that the authors "have no way of determining how many distinct families of commuting extensions of a given dimension exist" (by family, they mean
an equivalence class of tuples under the $GL_{r-n}$ action~(\ref{eq:action})). Theorem~\ref{th:3unique} provides a partial 
solution to this problem since it pinpoints an easily checkable criterion implying the existence of a single family at most.
It is also stated in~\cite{degani05} that they have had  only very limited with algorithms for computing commuting extensions.
The algorithms presented in this paper should help improve this situation as well.
These algorithms (for 3 matrices or more) can be viewed as constructive versions of the proof of Theorem~\ref{th:3unique}.

At this stage it is perhaps not  clear that  Theorem~\ref{th:3unique} has any interesting application, i.e., that there actually are tuples 
$(A_1,A_2,A_3)$  that satisfy the hypotheses of the theorem and admit commuting extensions of size $r$. 
We will see in Section~\ref{sec:generic} that if $K$ is an infinite field, such examples exist for all~$n$ and all $r \in [n,4n/3]$.\footnote{Such examples also exist in finite fields when $|K|$ is large enough compared to $n$; see Remark~\ref{rem:finite} in Section~\ref{sec:generic}.}
 Moreover, we will show that 
the situation studied in  Theorem~\ref{th:3unique} is in a precise sense (defined at the beginning of Section~\ref{sec:generic})
 the generic one. We would like to point out right away
that we do {\em not} mean by this that we pick generic matrices $A_1,A_2,A_3 \in M_n(K)$. This would not be an interesting model to study
because, as explained in Section~\ref{sec:nonunique}, for the range of values of $r$ considered in this paper ($r \leq 4n/3$) most triples of matrices of size $n$ 
do not have any commuting extension of size $r$. 

The worst-case complexity of computing commuting extensions seems to be unknown. For instance, it is not known whether
computing commuting extensions of a given size is NP-hard.

\subsection{Methods}

Our methods are almost entirely linear algebraic in nature (we also use some basic properties of Zariski topology in Section~\ref{sec:generic} to state and prove our genericity theorem).
In a nutshell, assumptions (i) and (ii) in Theorem~\ref{th:3unique} imply certain direct sum decompositions for the corresponding subspaces (see Lemma~\ref{lem:2direct} and Proposition~\ref{prop:3direct} in Section~\ref{sec:com}).
These decompositions in turn help us identify the unknown blocks in~(\ref{eq:block}).
For instance, we show  in Corollary~\ref{cor:topright} that $\Ima(B_1) = \Ima  [A_1,A_2]  \cap \Ima  [A_1,A_3].$
This does not identify $B_1$ uniquely, but this ambiguity is intrinsic to the problem 
due to the $GL_{r-n}$ action~(\ref{eq:action}).
Once a choice for $B_1$ is made, it turns out that all the blocks in the extension can be identified by solving a sequence
of linear systems.

For the  proof of our genericity theorem we assume that $K$ is an infinite field and we proceed in reverse: we first show that the above direct sum decompositions
hold generically. Then we show that the hypotheses of Theorem~\ref{th:3unique} follow from these direct sum decompositions.
This proof is partially nonconstructive: it shows that most inputs of a certain natural form (arising from simultaneously diagonalisable matrices) satisfy the hypotheses of Theorem~\ref{th:3unique}, but it does not give a way to explicitly write
down such inputs.

\subsection{Where to look for a commuting extension}

A thorough study of the commuting extension problem in the class of real symmetric matrices for $r=n+1$ and $p=2$ 
is presented in~\cite[Theorem 4.2]{KSW09}.  Their construction relies on the resolution of polynomial equations (including eigenvalue computations). In the same vein, one finds in~\cite[Example 4.3]{KW08} an explicit construction of a commuting extension
 of size $r=3$ for a pair of matrices of size $n=2$. These matrices have rational entries, but 
the extension has entries in $\mathbb{Q}[\sqrt{2}]$. 
This example, which is also interesting for its failure of essential uniqueness, is reproduced in Section~\ref{sec:nonunique}.
 By contrast, the algorithm presented in the present paper relies only on the resolution of linear systems. This is the reason why
 one can look without loss of generality for a solution in the ground field, as stated in the last part of Theorem~\ref{th:3unique}.

 \subsection{Further work}
 
 As explained earlier, the notion of commuting extensions is relevant in diverse areas such as algebraic complexity, numerical 
 analysis and quantum information. We will give in a companion paper some applications of the results presented here.
 One insight of the present paper is that under certain assumptions, it is easier to compute commuting extensions for
 three or more matrices than for two matrices (in the sense that no such algorithms or uniqueness results are known for two matrices).  It would certainly be interesting to obtain similar results for two matrices.
 At the moment, this problem is wide open. Also, it would be interesting to extend our results beyond $r=4n/3$.
 As explained in Section~\ref{sec:nonunique}, new difficulties seem to arise for $r=2n$ and beyond.
 
 \subsection{Organization of the paper}
 
 In the next section we present some general facts on the existence of commuting extensions and their uniqueness (or lack thereof).
 The main developments begin in Section~\ref{sec:unique}.  We prove there  our results on the essential uniqueness of commuting extensions, for 3 matrices and then more generally for $p\geq 3$ matrices. In Section~\ref{sec:unique} and throughout the paper, 
 most of the important ideas already appear for $p=3$ matrices.
 We build on these results in Section~\ref{sec:algo} to derive algorithms for the construction of commuting extensions.
 Finally, it is shown in Section~\ref{sec:generic} that our results are applicable to a wide range of input matrices.
  
 \section{Existence and non-uniqueness of commuting extensions} \label{sec:nonunique}
 
 Commuting extensions exist for any tuple $(A_1,\ldots,A_p)$ of matrices of size $n$ if we allow the size of the extended matrices to be large enough compared to~$n$ and $p$. An explicit construction of size $r=pn$ based on "block circulants" 
 was presented in~\cite{degani05}. As already pointed out in~\cite{koiran20commuting}, commuting extensions of size $r=2n$  exist for any tuple of matrices.
 Namely, one can  take:
\begin{equation} \label{eq:2n}
N_i = \begin{pmatrix}
A_i & -A_i\\
A_i & -A_i
\end{pmatrix}
\end{equation}
since $N_i N_j=0$ for all $i,j$. This simple construction is also relevant  for the study of the (essential) uniqueness of commuting 
extensions. Anticipating on Section~\ref{sec:generic}, let us pick an arbitrary tuple $(Z_1,\ldots,Z_p)$ of simultaneously 
diagonalisable matrices of size $r$. For $n<r$, these matrices form a commuting extension of $(A_1,\ldots,A_p)$ 
where $A_i$ is the top left block of $Z_i$ of size $n$. If $r=2n$, $(A_1,\ldots,A_p)$ also admits another commuting
extension of size $2n$, namely, the extension $(N_1,\ldots,N_p)$. 
This extension is certainly not equivalent\footnote{except if $Z_i=0$ for all $i$.}  to
$(Z_1,\ldots,Z_p)$ under the $GL_{r-n}$ action~(\ref{eq:action})
 because the $Z_i$ are diagonalisable and the $N_i$ are nilpotent.
 This construction shows that the extension of our uniqueness results from $r=4n/3$ to $r=2n$ and beyond is problematic.
 One can perhaps obtain similar uniqueness theorems beyond $2n$ by considering only commuting extensions that are diagonalisable, but this problem is wide open at the moment.
Essential uniqueness also fails in the following situation:
\begin{example}
Let $A=\begin{pmatrix}
1 & 0\\
0 & 2
\end{pmatrix}$,
$B=\begin{pmatrix}
0 & 1\\
1 & 0
\end{pmatrix}$.
The commuting extension 
$$A_{ext}=\begin{pmatrix}
1 & 0 & 2\sqrt{2}/x \\
0 & 2 & -4/x \\
\frac{-y}{\sqrt{2}} & y & -2y/x
\end{pmatrix},
B_{ext}=\begin{pmatrix}
0 & 1 & -2/y\\
1 & 0 & \sqrt{2}/y\\
\frac{x}{2} & \frac{-x}{2\sqrt{2}} & \frac{x}{\sqrt{2}y}
\end{pmatrix}.$$
is constructed in~\cite[Example 4.3]{KW08}. Here $x$ and $y$ are arbitrary. Note that there is one additional degree of freedom (choice of $x$ {\em and} choice of $y$) compared to~(\ref{eq:action}).
It can be checked that 
$A_{ext}B_{ext}= \alpha I_3$ for some $\alpha \neq 0$. Any pair of matrices satisfying such a relation must commute, 
and this is the way this example is constructed in~\cite{KW08}.
\end{example}
More generally, it follows from~\cite[Theorem 4.1]{KW08} that two invertible matrices of size $n$ always have a  commuting extension of size $2n-1$. This results lends itself well to an efficient construction~\cite[Algorithm 4.2]{KW08}, but it does not allow 
the determination of a commuting extension of minimal size. By contrast, Theorem~\ref{th:3unique} certifies that the commuting extensions constructed in the present paper are of minimal size.

 Finally, we point out that when $K$ is an infinite field,\footnote{or if $K$ is finite but large enough.} most tuples $(A_1,\ldots,A_p) \in M_n(K)^p$ do not have any commuting extension of size $r$
  for the range of values of $r$ considered in this paper ($r \leq 4n/3$). 
  This is so because  commuting extensions of size~$r$ do not even exist for most pairs $(A_1,A_2) \in M_n(K)^2$ if $n$ is large enough ($n \geq 6$ suffices).
  This can be seen by comparing the dimension of $M_n(K)^2$ (namely, $2n^2$) to the (smaller) dimension of the set of pairs of commuting
  matrices of size $r$, which is equal to $r^2+r$.
  The latter result follows from two facts:
  \begin{itemize}
  \item[(i)] A pair of of matrices of size $r$ commutes if and only if it is in the closure of the set of pairs of matrices of size $r$  that are simultaneously diagonalisable.
  \item[(ii)] The closure of the set of $p$-tuples of matrices of size $r$ that are simultaneously diagonalisable has dimension $r^2+(p-1)r$. 
  \end{itemize}
  Here we work with the standard notion of dimension from algebraic geometry
   (computed over the algebraic closure $\overline{K}$).
  Fact (i) was originally established by Motzkin and Taussky~\cite[Theorem 6]{motzkin55} (see also Theorem~6.8.1 in~\cite{omeara11}),
  and Fact (ii) by Guralnick and Sethuraman~\cite[Proposition~6]{guralnick00} as pointed out in~\cite[Section 3.4]{jelisiejew22}.
  
 \section{Uniqueness Theorems} \label{sec:unique}
 
 In Section~\ref{sec:com} we work out some basic properties of commutators which will be useful throughout the paper.
 Theorem~\ref{th:3unique} is proved in Section~\ref{sec:3unique}, and we extend it to $p \geq 3$ matrices 
 in Section~\ref{sec:4unique}.
 
\subsection{Commutators} \label{sec:com}

Throughout Section~\ref{sec:com} we fix a tuple of matrices $(A_1,\ldots,A_p)$ having  a commuting extension of size $r$,
which we denote as usual by $(Z_1,\ldots,Z_p)$. 
We keep the same notation as in~(\ref{eq:block}) for the block decomposition of the $Z_i$.
\begin{lemma} \label{lem:2direct}
For any pair of matrices $(A_k,A_l)$ in the tuple $(A_1,\ldots,A_p)$ we have
\begin{equation} \label{eq:commutator}
[A_k,A_l]=B_l C_k - B_k C_l
\end{equation}
and $\rk [A_k,A_l] \leq 2(r-n)$. 

Assume furthermore that  $\rk [A_k,A_l] = 2(r-n)$.
Then $r \leq 3n/2$, the matrices $B_k,B_l,C_k,C_l$ are all of rank $r-n$  and:
\begin{equation} \label{eq:direct}
\Ima  [A_k,A_l]  = \Ima(B_k) \oplus \Ima(B_l),
\end{equation}
\begin{equation} \label{eq:direct2}
\Ima  [A_k,A_l]^T  = \Ima(C_k^T) \oplus \Ima(C_l^T).
\end{equation}
\end{lemma}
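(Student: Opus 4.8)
The starting point is the block multiplication identity. Writing $Z_k Z_l$ and $Z_l Z_k$ in block form using~\eqref{eq:block} and comparing the upper-left $n \times n$ blocks, one gets $A_k A_l + B_k C_l$ for one product and $A_l A_k + B_l C_k$ for the other; since $Z_k Z_l = Z_l Z_k$ these agree, which rearranges to~\eqref{eq:commutator}: $[A_k,A_l] = B_l C_k - B_k C_l$. From this identity the rank bound is immediate: $\Ima [A_k,A_l] \subseteq \Ima(B_l C_k) + \Ima(B_k C_l) \subseteq \Ima(B_l) + \Ima(B_k)$, and since each $B_i \in M_{n,r-n}(K)$ has at most $r-n$ columns, $\rk[A_k,A_l] \le 2(r-n)$. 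That disposes of the first paragraph of the statement.

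Now assume $\rk[A_k,A_l] = 2(r-n)$. The chain of inclusions above must then be a chain of equalities, so in particular $\dim\big(\Ima(B_k) + \Ima(B_l)\big) \ge 2(r-n)$; but each summand has dimension at most $r-n$, forcing $\rk B_k = \rk B_l = r-n$ and making the sum direct — this is~\eqref{eq:direct}. Since $\rk B_k = r-n$ requires $n \ge r-n$, we get $r \le 3n/2$. For the rank of the $C_i$: from $2(r-n) = \rk[A_k,A_l] = \rk(B_l C_k - B_k C_l) \le \rk(B_l C_k) + \rk(B_k C_l) \le \rk(C_k) + \rk(C_l) \le 2(r-n)$, equality propagates and gives $\rk C_k = \rk C_l = r-n$ as well. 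For~\eqref{eq:direct2}, the cleanest route is to apply the whole argument to the transposed tuple: transposing~\eqref{eq:commutator} gives $[A_k,A_l]^T = [A_l^T, A_k^T] = C_k^T B_l^T - C_l^T B_k^T$, which has exactly the same shape with the roles of $B$'s and $C$'s swapped, so the column-space decomposition for the transpose is precisely~\eqref{eq:direct2}. Alternatively one notes $\Ima M^T = (\ker M)^\perp$ and argues directly, but the transpose-symmetry observation is shorter.

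The only place that needs genuine care is the step "$\Ima[A_k,A_l] = \Ima(B_l C_k - B_k C_l)$ forces $\Ima(B_k) + \Ima(B_l)$ to be a \emph{direct} sum of full-rank pieces." The inclusion $\Ima(B_l C_k - B_k C_l) \subseteq \Ima(B_k) + \Ima(B_l)$ is clear, but one should check that the hypothesis on the \emph{left}-hand dimension is what pins down both the ranks and the directness, rather than merely $\dim(\Ima B_k + \Ima B_l) = 2(r-n)$ on its own. The argument is just the pigeonhole inequality $2(r-n) \le \dim(\Ima B_k + \Ima B_l) = \rk B_k + \rk B_l - \dim(\Ima B_k \cap \Ima B_l) \le (r-n)+(r-n) - \dim(\Ima B_k \cap \Ima B_l)$, which simultaneously yields $\rk B_k = \rk B_l = r-n$ and $\Ima B_k \cap \Ima B_l = 0$; I expect this short chain of (in)equalities to be the technical heart, everything else being bookkeeping with block products and transposes.
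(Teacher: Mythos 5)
Your proof is correct and follows essentially the same route as the paper: extract \eqref{eq:commutator} from the top-left blocks of $Z_kZ_l=Z_lZ_k$, deduce the rank bound and \eqref{eq:direct} from the inclusion $\Ima[A_k,A_l]\subseteq\Ima(B_k)+\Ima(B_l)$, and get \eqref{eq:direct2} by transposing. One local slip: your justification of $r\le 3n/2$ (``$\rk B_k=r-n$ requires $n\ge r-n$'') only yields $r\le 2n$; the correct one-line reason is that $[A_k,A_l]$ is an $n\times n$ matrix (equivalently, $\Ima(B_k)\oplus\Ima(B_l)\subseteq K^n$), so $2(r-n)\le n$, which is exactly $r\le 3n/2$.
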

\begin{proof}
In order to obtain the expression for $[A_k,A_l]$  write $Z_k Z_l$ and $Z_l Z_k$ as two block matrices;
then  equate the top-left blocks of these matrices. Note that  $B_k$ and $B_l$  are of rank at most $r-n$ since these two matrices have $r-n$ columns. 
This implies $\rk [A_k,A_l] \leq 2(r-n)$ since~(\ref{eq:commutator}) expresses  $[A_k,A_l]$ 
as the difference of two matrices of rank at most $r-n$. This expression also implies that 
\begin{equation} \label{eq:sumincl}
\Ima  [A_k,A_l]  \subseteq  \Ima(B_k) + \Ima(B_l).
\end{equation}
The bound $r \leq 3n/2$ follows immediately from  $\rk [A_k,A_l] = 2(r-n)$. This assumption also implies~(\ref{eq:direct}) 
since the two subspaces on the right-hand side of~(\ref{eq:sumincl}) are of dimension at most $r-n$, and it also implies
$\rk B_k = \rk B_l = r-n$.
Finally, $\rk C_k = \rk C_l = r-n$ and the direct sum decomposition~(\ref{eq:direct2}) follow from a similar reasoning applied to the transpose of~(\ref{eq:commutator}).
\end{proof}
 The bound $\rk [A_k,A_l] \leq 2(r-n)$ in this lemma is a basic property of commuting extensions used by 
 Strassen~\cite{strassen83} in his lower bound for (border) tensor rank. This inequality was rediscovered in~\cite{degani05,orsucci15}.
\begin{proposition} \label{prop:3direct}
For any triple of matrices  $(A_k,A_l,A_m)$  in the tuple $(A_1,\ldots,A_p)$, the linear space
 $V_{klm}= \Ima [A_k,A_l]+\Ima [A_k,A_m]$ satisfies $\dim  V_{klm} \leq 3(r-n)$. 
If this inequality is an equality then $r \leq 4n/3$, $B_k,B_l$ and $B_m$ are of rank $r-n$ and we have the direct sum decomposition
\begin{equation} \label{eq:3direct}
V_{klm}= \Ima(B_k) \oplus \Ima(B_l) \oplus \Ima(B_m).
\end{equation}
\end{proposition}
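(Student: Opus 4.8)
The plan is to mimic the structure of the proof of Lemma~\ref{lem:2direct}, but one dimension up. First I would record the obvious inclusion
\begin{equation} \label{eq:3sumincl}
V_{klm} = \Ima [A_k,A_l]+\Ima [A_k,A_m] \subseteq \Ima(B_k)+\Ima(B_l)+\Ima(B_m),
\end{equation}
which follows immediately from the two instances $[A_k,A_l]=B_lC_k-B_kC_l$ and $[A_k,A_m]=B_mC_k-B_kC_m$ of~(\ref{eq:commutator}): every column of each commutator lies in $\Ima(B_k)+\Ima(B_l)$ resp. $\Ima(B_k)+\Ima(B_m)$. Since each $B_j$ has only $r-n$ columns, the right-hand side of~(\ref{eq:3sumincl}) has dimension at most $3(r-n)$, which gives the bound $\dim V_{klm}\le 3(r-n)$.

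Next, suppose equality holds, i.e. $\dim V_{klm}=3(r-n)$. Then the inclusion~(\ref{eq:3sumincl}) forces the containing space to have dimension exactly $3(r-n)$ as well, so the inclusion is an equality and moreover each summand $\Ima(B_j)$ must have dimension exactly $r-n$ (otherwise the sum would have dimension $<3(r-n)$), and the sum must be direct — this is the standard fact that a sum of subspaces whose dimensions add up to the dimension of the sum is automatically a direct sum. This yields both $\rk B_k=\rk B_l=\rk B_m=r-n$ and the decomposition~(\ref{eq:3direct}). For the bound $r\le 4n/3$: the ambient space $V_{klm}$ sits inside $K^n$ (the columns of the $A_i$, hence of their commutators, have $n$ entries), so $3(r-n)=\dim V_{klm}\le n$, i.e. $3r\le 4n$.

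I expect the argument to be essentially routine once~(\ref{eq:3sumincl}) is in place; the only point requiring a moment of care is the passage from "equality of dimensions" to "the $B_j$ have full rank $r-n$ \emph{and} the sum is direct". The clean way is: $3(r-n)=\dim V_{klm}\le \dim(\Ima B_k)+\dim(\Ima B_l)+\dim(\Ima B_m)\le 3(r-n)$, so all inequalities are equalities, which simultaneously pins down each $\rk B_j=r-n$ and (since $\dim$ of the sum equals the sum of the $\dim$s) makes the sum direct. No separate case analysis is needed, and unlike Lemma~\ref{lem:2direct} there is no transpose statement to prove here. If one also wants the analogous decomposition for $\Ima[A_k,A_l]^T+\Ima[A_k,A_m]^T$ in terms of the $C_j^T$ — which is not asserted in the proposition — it would follow by applying the same reasoning to the transposed commutators, but this is not needed for the statement as given.
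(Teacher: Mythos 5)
Your proof is correct and follows essentially the same route as the paper: the inclusion $V_{klm}\subseteq \Ima(B_k)+\Ima(B_l)+\Ima(B_m)$ obtained from the two instances of~(\ref{eq:commutator}), the bound $3(r-n)$ from counting columns, the observation $V_{klm}\subseteq K^n$ for $r\le 4n/3$, and the dimension-equality argument forcing each $\rk B_j=r-n$ and the directness of the sum. No gaps.
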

\begin{proof}
Note that
\begin{equation} \label{eq:3sumincl}
V_{klm} \subseteq \Ima(B_k) + \Ima(B_l) + \Ima(B_m)
\end{equation}
 by~(\ref{eq:sumincl})
and by the similar inclusion $\Ima  [A_k,A_m]  \subseteq  \Ima(B_k) + \Ima(B_m).$
This implies $\dim  V_{klm} \leq 3(r-n)$ 
since this linear space is included in the sum of 3 spaces of dimension at most $r-n$ each.

Let us now assume  that $\dim  V_{klm} = 3(r-n)$. This is only possible when $r \leq 4n/3$ since $V_{klm} \subseteq K^n$.
Moreover, the 3 subspaces on the right-hand side of~(\ref{eq:3sumincl}) must now be in direct sum since they
are of dimension at most $r-n$ each, and they must in fact be of dimension exactly $r-n$.  
\end{proof}
As an important corollary, under the assumptions of Lemma~\ref{lem:2direct} and Proposition~\ref{prop:3direct}
 the upper right block $B_k$ has the same image
in all commuting extensions of $(A_1,\ldots,A_p)$.
\begin{corollary} \label{cor:topright}
If $\dim (\Ima [A_k,A_l]+\Ima [A_k,A_m]) = 3(r-n)$, $\dim \Ima [A_k,A_l] = 2(r-n)$ and $\dim \Ima [A_k,A_m] = 2(r-n)$ 
we have $$\Ima(B_k) = \Ima  [A_k,A_l]  \cap \Ima  [A_k,A_m].$$
\end{corollary}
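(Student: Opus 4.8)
The plan is to derive the identity $\Ima(B_k) = \Ima[A_k,A_l] \cap \Ima[A_k,A_m]$ by combining the two direct sum decompositions already available: equation~(\ref{eq:direct}) from Lemma~\ref{lem:2direct} applied to each of the pairs $(A_k,A_l)$ and $(A_k,A_m)$, and equation~(\ref{eq:3direct}) from Proposition~\ref{prop:3direct} applied to the triple $(A_k,A_l,A_m)$. Note first that all three hypotheses of the corollary are exactly what is needed to invoke these results: $\dim\Ima[A_k,A_l]=2(r-n)$ and $\dim\Ima[A_k,A_m]=2(r-n)$ trigger Lemma~\ref{lem:2direct} for the two pairs, and the dimension equality $\dim V_{klm}=3(r-n)$ triggers Proposition~\ref{prop:3direct}. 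In particular $B_k,B_l,B_m$ all have rank $r-n$, so each $\Ima(B_\bullet)$ has dimension exactly $r-n$.

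First I would prove the inclusion $\Ima(B_k) \subseteq \Ima[A_k,A_l] \cap \Ima[A_k,A_m]$. By~(\ref{eq:direct}) applied to the pair $(A_k,A_l)$ we have $\Ima(B_k)\subseteq\Ima(B_k)\oplus\Ima(B_l)=\Ima[A_k,A_l]$, and by~(\ref{eq:direct}) applied to $(A_k,A_m)$ we likewise have $\Ima(B_k)\subseteq\Ima[A_k,A_m]$. Hence $\Ima(B_k)$ is contained in the intersection. For the reverse inclusion, the cleanest route is a dimension count. Let $W = \Ima[A_k,A_l]\cap\Ima[A_k,A_m]$. By the inclusion–exclusion formula for dimensions,
\[
\dim W = \dim\Ima[A_k,A_l] + \dim\Ima[A_k,A_m] - \dim\bigl(\Ima[A_k,A_l]+\Ima[A_k,A_m]\bigr) = 2(r-n)+2(r-n)-3(r-n) = r-n,
\]
using the hypotheses directly (the sum is exactly $V_{klm}$, of dimension $3(r-n)$). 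Since we already have $\Ima(B_k)\subseteq W$ and $\dim\Ima(B_k) = r-n = \dim W$, equality of the subspaces follows.

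I do not expect any serious obstacle here: the corollary is essentially a bookkeeping consequence of the two decomposition results, and the only mild point of care is making sure the dimension of $\Ima(B_k)$ is genuinely $r-n$ (rather than merely $\le r-n$), which is precisely the rank statement guaranteed by Lemma~\ref{lem:2direct} or Proposition~\ref{prop:3direct} under the stated equalities. An alternative to the dimension count would be to argue directly from~(\ref{eq:3direct}): if $v \in \Ima[A_k,A_l]\cap\Ima[A_k,A_m]$, then by~(\ref{eq:direct}) for the first pair $v = b_k + b_l$ with $b_k\in\Ima(B_k)$, $b_l\in\Ima(B_l)$, and by~(\ref{eq:direct}) for the second pair $v = b_k' + b_m'$ with $b_k'\in\Ima(B_k)$, $b_m'\in\Ima(B_m)$; subtracting gives $b_l - b_m' = b_k' - b_k \in \Ima(B_l)\oplus\Ima(B_m)$ meeting $\Ima(B_k)$, which by the direct sum~(\ref{eq:3direct}) forces $b_l = b_m' = 0$, hence $v = b_k \in \Ima(B_k)$. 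Either argument is short; I would present the dimension-count version for brevity.
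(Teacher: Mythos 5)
Your proof is correct and rests on exactly the same ingredients as the paper's one-line argument, namely the direct sum decompositions~(\ref{eq:direct}) for both pairs and~(\ref{eq:3direct}) for the triple; indeed your ``alternative'' direct argument at the end is precisely the paper's proof. The dimension-count version you lead with is a harmless equivalent reformulation of the same bookkeeping.
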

\begin{proof}
This follows from~(\ref{eq:3direct}), from~(\ref{eq:direct}) and from the similar direct sum decomposition 
$\Ima  [A_k,A_m]  = \Ima(B_k) \oplus \Ima(B_m).$
\end{proof}
So far, we have only exploited the equality of the top left blocks in $Z_k Z_l$ and $Z_l Z_k$.
By equating the 
top right blocks we obtain:
\begin{lemma} \label{lem:bottomright}
For any pair of indices $k,l \in \{1,\ldots,p\}$ we have
$$B_l D_k - B_k D_l = A_k B_l - A_l B_k.$$
\end{lemma}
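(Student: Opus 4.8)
The plan is to proceed exactly as in the proof of Lemma~\ref{lem:2direct}, but to compare a different pair of blocks. We again use the fact that the $Z_i$ pairwise commute, so $Z_k Z_l = Z_l Z_k$, and we expand both products in block form according to the decomposition~(\ref{eq:block}). In Lemma~\ref{lem:2direct} the top-left $n\times n$ blocks were equated, giving~(\ref{eq:commutator}); here the point is simply to equate the top-right $n\times(r-n)$ blocks instead.

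Concretely, I would first write out
$$Z_k Z_l = \begin{pmatrix} A_k A_l + B_k C_l & A_k B_l + B_k D_l \\ C_k A_l + D_k C_l & C_k B_l + D_k D_l \end{pmatrix},$$
so that the top-right block of $Z_k Z_l$ is $A_k B_l + B_k D_l$. By symmetry (swapping the roles of $k$ and $l$), the top-right block of $Z_l Z_k$ is $A_l B_k + B_l D_k$. Since $Z_k Z_l = Z_l Z_k$, these two blocks coincide: $A_k B_l + B_k D_l = A_l B_k + B_l D_k$. Moving the terms involving the $D_i$ to the left and the terms involving the $A_i$ to the right gives exactly $B_l D_k - B_k D_l = A_k B_l - A_l B_k$, which is the claimed identity.

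I do not expect any genuine obstacle: this is a routine block-matrix computation of the same flavour as~(\ref{eq:commutator}), and no hypothesis beyond commutativity is needed. The only minor point to verify is that all the products appearing are well-defined, which follows from the block sizes specified after~(\ref{eq:block}): $A_k$ is $n\times n$, $B_l$ is $n\times(r-n)$, and $D_l$ is $(r-n)\times(r-n)$, so $A_k B_l$ and $B_k D_l$ are both $n\times(r-n)$ matrices, and likewise on the other side.
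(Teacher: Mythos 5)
Your computation is correct and is exactly the argument the paper intends: the lemma is introduced with the phrase ``By equating the top right blocks we obtain,'' and your block expansion of $Z_kZ_l=Z_lZ_k$ carries that out. Nothing further is needed.
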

This lemma will give us a way of determining the bottom right blocks of a commuting extension once we have determined the top right blocks.

\subsection{Uniqueness for 3 matrices: proof of Theorem~\ref{th:3unique}} 
\label{sec:3unique}

Consider a tuple $(A_1,A_2,A_3)$ of matrices of size~$n$ satisfying the hypotheses of Theorem~\ref{th:3unique}.
If there is a commuting extension of size less than $r$, we can obtain a commuting extension of size exactly $r$ by 
adding rows and columns of 0's to the three matrices in the extension.
 The top-right blocks in this extension would be of rank $<r-n$,
 in contradiction with Lemma~\ref{lem:2direct} (and also with Proposition~\ref{prop:3direct}). This establishes the first part of Theorem~\ref{th:3unique} (one could also argue directly that the existence of a commuting extension of size $s<r$
  implies $\rk [A_1,A_2] \leq 2(s-n) < 2(r-n)$, in contradiction with hypothesis (i) of Theorem~\ref{th:3unique}).

Suppose now that our tuple has two commuting extensions of size $r$, $(Z_1,Z_2,Z_3)$ and $(Z'_1,Z'_2,Z'_3)$.
In order to prove Theorem~\ref{th:3unique}, we must show that $Z'_i = \rho_M(Z_i)$ for some $M \in GL_{r-n}(K)$.
By Corollary~\ref{cor:topright} the top right blocks of $Z_1$ and $Z'_1$ have the same image.
 Hence there exists $M \ \in GL_{r-n}(K)$ such that $B'_1=B_1M$. 
 By applying $\rho_{M^{-1}}$ to the $Z'_i$ we can reduce to the case $B'_1=B_1$.
  The essential uniqueness in Theorem~\ref{th:3unique} therefore follows from the following result.
 \begin{theorem} \label{th:topright}
Consider a tuple $(A_1,A_2,A_3)$ of matrices of size~$n$ such that:
\begin{itemize}
\item[(i)] The three linear spaces $\Ima [A_1,A_2]$, $\Ima [A_1,A_3]$, $\Ima[A_2,A_3]$ are of dimension $2(r-n)$.
\item[(ii)] The three linear spaces $\Ima [A_1,A_2]+\Ima [A_1,A_3]$, $\Ima [A_2,A_1]+\Ima [A_2,A_3]$ and 
$\Ima [A_3,A_1]+\Ima [A_3,A_2]$ are of dimension $3(r-n)$.
\end{itemize}
If $(Z_1,Z_2,Z_3)$ and $(Z'_1,Z'_2,Z'_3)$ are two commuting extensions of size $r$ such that $B_1=B'_1$,
these extensions are identical: $(Z_1,Z_2,Z_3)=(Z'_1,Z'_2,Z'_3)$.
 \end{theorem}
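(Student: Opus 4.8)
The plan is to bootstrap: starting from the hypothesis $B_1=B'_1$ and the structural results of Section~\ref{sec:com}, I will successively pin down the blocks $C_2,C_3$, then $B_2,B_3$, then $C_1$, and finally $D_1,D_2,D_3$, showing at each stage that the block in question is forced to take the same value in both extensions.

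First I would record the rank and image information. By hypothesis (i) and Lemma~\ref{lem:2direct}, in either extension every $B_i$ has full column rank $r-n$ (hence a left inverse) and every $C_i$ has full row rank $r-n$ (hence a right inverse). By hypotheses (i)--(ii) and Corollary~\ref{cor:topright}, $\Ima(B_i)=\Ima(B'_i)$ for $i=1,2,3$, since this image depends only on the $A_j$'s; together with Lemma~\ref{lem:2direct} this means that the decomposition $\Ima[A_i,A_j]=\Ima(B_i)\oplus\Ima(B_j)$ is the same for both extensions. Let $\pi^{(j)}_{ij}$ be the projection of $\Ima[A_i,A_j]$ onto $\Ima(B_j)$ with kernel $\Ima(B_i)$; it depends only on the $A$'s. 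Applying $\pi^{(j)}_{ij}$ and $\pi^{(i)}_{ij}$ columnwise to the identity $[A_i,A_j]=B_jC_i-B_iC_j$ of Lemma~\ref{lem:2direct} gives $B_jC_i=\pi^{(j)}_{ij}[A_i,A_j]$ and $B_iC_j=-\pi^{(i)}_{ij}[A_i,A_j]$. The key point is that these right-hand sides are determined by the $A$'s alone, so each of the six products $B_iC_j$ with $i\neq j$ takes the same value in $(Z_1,Z_2,Z_3)$ as in $(Z'_1,Z'_2,Z'_3)$.

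Now I would run the elimination in the right order. From $B_1C_2=B'_1C'_2$ and $B_1=B'_1$, left-multiplying by a left inverse of $B_1$ yields $C_2=C'_2$, and likewise $C_3=C'_3$ from the product $B_1C_3$. Next, from $B_2C_3=B'_2C'_3=B'_2C_3$ and the existence of a right inverse of $C_3$ we get $B_2=B'_2$, and symmetrically $B_3=B'_3$ from $B_3C_2$ and a right inverse of $C_2$. Then $B_2C_1=B'_2C'_1=B_2C'_1$, and a left inverse of $B_2$ gives $C_1=C'_1$. At this point all the $A_i$, $B_i$ and $C_i$ blocks coincide. Finally, for the bottom-right blocks I would use Lemma~\ref{lem:bottomright}: for each pair $k,l$ the identity $B_lD_k-B_kD_l=A_kB_l-A_lB_k$ holds in both extensions with the same right-hand side, so the differences $\Delta_i:=D_i-D'_i$ satisfy $B_l\Delta_k=B_k\Delta_l$; since $\Ima(B_k)\cap\Ima(B_l)=\{0\}$ both sides vanish, and $\Delta_k=0$ because $B_l$ has full column rank. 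Hence $(Z_1,Z_2,Z_3)=(Z'_1,Z'_2,Z'_3)$.

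The one genuinely new idea is the observation that the decomposition $\Ima[A_i,A_j]=\Ima(B_i)\oplus\Ima(B_j)$ does not depend on the chosen extension, which turns the commutator identity $[A_i,A_j]=B_jC_i-B_iC_j$ into a rigid constraint on the products $B_iC_j$. Once that is in hand the rest is a sequence of elementary full-rank cancellations, and the only thing to get right is the order of elimination (one cannot reach $B_2$ before $C_2,C_3$, nor $C_1$ before $B_2$); I do not expect any real obstacle beyond this bookkeeping.
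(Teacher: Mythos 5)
Your proof is correct and follows essentially the same route as the paper: the direct-sum decompositions $\Ima[A_i,A_j]=\Ima(B_i)\oplus\Ima(B_j)$, with the summands extension-independent by Corollary~\ref{cor:topright}, force the products $B_iC_j$, and full-rank cancellation plus Lemma~\ref{lem:bottomright} then pins down every block. The only difference is cosmetic: you extract all six products $B_iC_j$ at once via the projections onto the direct summands, whereas the paper derives them one pair at a time (via Lemma~\ref{lem:sum}) in a slightly different elimination order; both orderings work.
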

 We will repeatedly use the following simple lemma in the proof of Theorem~\ref{th:topright}.
 \begin{lemma} \label{lem:sum}
Suppose that $P,P',Q,Q' \in M_n(K)$ are four matrices such that $P+Q=P'+Q'$, $\Ima(P)=\Ima(P')$ and $\Ima(Q)=\Ima(Q')$.
If $\Ima(P)$ and $\Ima(Q)$ are in direct sum then $P=P'$ and $Q=Q'$.
\end{lemma}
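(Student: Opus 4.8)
The plan is to reduce everything to showing that a single matrix vanishes. Set $D := P - P'$. The relation $P+Q = P'+Q'$ rearranges to $D = Q' - Q$, so once we prove $D = 0$ we get simultaneously $P = P'$ and $Q = Q'$. Thus the whole lemma comes down to bounding $\Ima(D)$.

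First I would bound $\Ima(D)$ from two directions. Writing $D = P - P'$ and using the elementary inclusion $\Ima(X - Y) \subseteq \Ima(X) + \Ima(Y)$ (each column of $X-Y$ is a difference of a column of $X$ and a column of $Y$), we get $\Ima(D) \subseteq \Ima(P) + \Ima(P')$, which equals $\Ima(P)$ by the hypothesis $\Ima(P) = \Ima(P')$. Applying the same reasoning to $D = Q' - Q$ together with $\Ima(Q) = \Ima(Q')$ yields $\Ima(D) \subseteq \Ima(Q)$. Hence $\Ima(D) \subseteq \Ima(P) \cap \Ima(Q)$, and since the sum $\Ima(P) + \Ima(Q)$ is direct this intersection is $\{0\}$. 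Therefore $D = 0$, so $P = P'$, and then $Q = Q' - D = Q'$ as well.

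There is no genuine obstacle here: the argument uses only $\Ima(X-Y) \subseteq \Ima(X)+\Ima(Y)$ and the definition of a direct sum. The one point that deserves a moment of care is to use the equalities of images \emph{before} intersecting, so that the two upper bounds on $\Ima(D)$ land respectively inside $\Ima(P)$ and inside $\Ima(Q)$, whose intersection is trivial; without invoking $\Ima(P)=\Ima(P')$ and $\Ima(Q)=\Ima(Q')$ one would only obtain $\Ima(D)\subseteq(\Ima(P)+\Ima(P'))\cap(\Ima(Q)+\Ima(Q'))$, which need not be zero.
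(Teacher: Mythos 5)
Your proof is correct and is essentially the paper's argument: the paper phrases it as a direct sum of the two subspaces of $M_n(K)$ consisting of matrices with image in $\Ima(P)$ and in $\Ima(Q)$ respectively, which, once unpacked, is exactly your computation that the difference $P-P'=Q'-Q$ has image in $\Ima(P)\cap\Ima(Q)=\{0\}$.
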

\begin{proof}
Let $\cal E$ (respectively, $\cal F$) be the set of matrices $M$ such that $\Ima(M) \subseteq  E$ 
(resp.,  $\Ima(M) \subseteq  F$) where $ E=\Ima(P)=\Ima(P')$ and ${ F}=\Ima(Q)=\Ima(Q')$.
Clearly, $\cal E$ and  $\cal F$ are linear subspaces of $M_n(K)$, and they are in direct sum:
if $M \in {\cal E} \cap {\cal F}$ then $\Ima(M) \subseteq E \cap F = \{0\}$, hence  $M=0$.
To conclude, note that $P+Q=P'+Q'$ where $P,P' \in {\cal E}$ and $Q, Q' \in {\cal F}$. Hence $P=P'$ and $Q=Q'$ by the
direct sum property.
\end{proof}
A variation on this is:
\begin{lemma} \label{lem:span}
Let $P,Q \in M_{n,r-n}(K)$ be two matrices of rank $r-n$ with $\Ima(P)$ and $\Ima(Q)$ in direct sum.
For any matrix $T \in M_{n,r-n}(K)$, there  exists a pair of matrices $X,Y \in M_{r-n}(K)$ such that $T=PX+QY$
if and only if $\Ima(T) \subseteq \Ima(P) \oplus \Ima(Q)$, and in this case the pair $(X,Y)$ is unique.
\end{lemma}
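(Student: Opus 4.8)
The plan is to prove Lemma~\ref{lem:span} as a statement purely about the column space of the augmented matrix $[P \mid Q] \in M_{n,2(r-n)}(K)$. First I would observe that a pair $(X,Y) \in M_{r-n}(K)^2$ with $T = PX + QY$ exists if and only if each column of $T$ lies in the column span of $[P\mid Q]$, i.e. $\Ima(T) \subseteq \Ima([P\mid Q])$. Since $\Ima(P)$ and $\Ima(Q)$ are in direct sum, $\Ima([P\mid Q]) = \Ima(P) \oplus \Ima(Q)$, which gives the stated equivalence with $\Ima(T) \subseteq \Ima(P) \oplus \Ima(Q)$.

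For uniqueness, suppose $PX + QY = PX' + QY'$. Then $P(X-X') = -Q(Y-Y')$, so this common matrix has image contained in $\Ima(P) \cap \Ima(Q) = \{0\}$, hence $P(X-X') = 0$ and $Q(Y-Y') = 0$. Now I would use the rank hypotheses: since $P$ has rank $r-n$ and exactly $r-n$ columns, $P$ is injective as a linear map $K^{r-n} \to K^n$, so $P(X-X') = 0$ forces every column of $X-X'$ to vanish, i.e. $X = X'$; symmetrically $Y = Y'$. One could also phrase this via left inverses: the full-column-rank matrices $P$ and $Q$ admit left inverses $P^+, Q^+$, and applying the block left inverse of $[P\mid Q]$ recovers $(X,Y)$ explicitly, which simultaneously re-proves existence and uniqueness.

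There is no real obstacle here; the only point needing a little care is keeping straight that "rank $r-n$ with $r-n$ columns" is exactly the condition making $P$ (resp. $Q$) column-injective, which is what both the existence (column span argument) and uniqueness parts rely on. I would present the proof in the left-inverse form since it is the most self-contained and makes the role of the rank hypothesis transparent. Since the statement is elementary linear algebra, I expect the write-up to be two or three sentences.
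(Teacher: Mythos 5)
Your proof is correct and follows essentially the same route as the paper: the paper also argues columnwise that solvability of $T=PX+QY$ is equivalent to $\Ima(T)\subseteq\Ima(P)\oplus\Ima(Q)$, and gets uniqueness from the fact that the $2(r-n)$ columns of $P$ and $Q$ form a basis of $\Ima(P)\oplus\Ima(Q)$, which is the same content as your intersection-plus-injectivity argument. No issues.
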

\begin{proof}
Let $p_k,q_k$ be the columns of $P$ and $Q$. Solving the system $T=PX+QY$ amounts to writing each column $t_j$ of~$T$
as a linear combination $t_j=\sum_k x_{kj} p_k + \sum_k y_{kj} q_k$. 
This is possible if and only if  $\Ima(T) \subseteq \Ima(P) \oplus \Ima(Q)$. In this case, the solution is unique since the columns
$p_k$ and $q_k$ form a basis of $\Ima(P) \oplus \Ima(Q)$. 
\end{proof}
\begin{proof}[Proof of Theorem~\ref{th:topright}]
We will show that the 12 blocks appearing in the block decomposition of $Z_1,Z_2,Z_3$ are equal to their counterparts
in $Z'_1,Z'_2,Z'_3$. We already know that the 3 top left blocks are equal (by definition of commuting extensions), and 
that $B_1=B'_1$ by hypothesis. It therefore remains to prove 8 equalities. Note that the off-diagonal blocks in the two
commuting extensions are all of dimension $r-n$ by Lemma~\ref{lem:2direct}.

Our first goal is to prove that $C_2=C'_2$. By~(\ref{eq:commutator}) we have
\begin{equation} \label{eq:12}
[A_1,A_2]=B_2 C_1 - B_1 C_2 = B'_2 C'_1 - B'_1 C'_2.
\end{equation}
The matrices appearing in these two decompositions of $[A_1,A_2]$ are all of full rank ($r-n$),
so $\Ima(B_2 C_1)=\Ima(B_2)$, $\Ima(B'_2 C'_1) = \Ima(B'_2)$, $\Ima(B_1 C_2) = \Ima(B_1)$, 
and $\Ima(B'_1 C'_2)=\Ima(B'_1)=\Ima(B_1)$.
Moreover, $\Ima(B_2)=\Ima(B'_2)$ by Corollary~\ref{cor:topright} applied to the triple $(A_2,A_1,A_3)$.
Since the images of $B_1$ and $B_2$ are in direct sum, it follows from  Lemma~\ref{lem:sum} that $B_2C_1=B'_2C'_1$ 
and $B_1C_2=B'_1C'_2=B_1C'_2$. Since $B_1$ is of full column rank, $B_1C_2=B_1 C'_2$ implies $C_2=C'_2$ as desired.

Then we use the knowledge that $C_2=C'_2$ to obtain the equality $B_3=B'_3$. This is very similar to the above derivation 
of $C_2=C'_2$, but instead of~(\ref{eq:12}) we start from
\begin{equation} \label{eq:23}
[A_2,A_3]=B_3 C_2 - B_2 C_3 = B'_3 C'_2 - B'_2 C'_3.
\end{equation}
By Corollary~\ref{cor:topright} applied to the triple $(A_3,A_1,A_2)$, $\Ima(B_3)=\Ima(B'_3)$ and we have already seen
that $\Ima(B_2)=\Ima(B'_2)$. Hence the two equalities $B_3C_2 = B'_3C'_2=B'_3C_2$, $B_2C_3 = B'_2 C'_3$ follow from another
application of Lemma~\ref{lem:sum}. Since $C_2$ is of full row rank, the equality $B_3C_2 =B'_3C_2$ implies $B_3=B'_3$.
Next, we derive from this equality the new equality $C_1=C'_1$. This is entirely parallel to the derivation of $C_2=C'_2$
from $B_1=B'_1$, but instead of~(\ref{eq:12}) we start from
\begin{equation} \label{eq:13}
[A_1,A_3]=B_3 C_1 - B_1 C_3 = B'_3 C'_1- B'_1 C'_3.
\end{equation}
In addition to $C_1=C'_1$, we also obtain as before an additional equality: ($B_1C_3=B'_1C'_3$) from Lemma~\ref{lem:sum}.

At this stage we have proved 3 out of the 8 block equalities. Moreover, we have just shown that $B_1C_3=B'_1C'_3$. 
Since $B_1=B'_1$, this implies $C_3=C'_3$. Recall now that we have obtained $B_2C_3=B'_2C'_3$ in the paragraph
following~(\ref{eq:23}). Armed with our new knowledge that $C_3=C'_3$, we can derive from this the equality $B_2=B'_2$.

We now have 5 out of the 8 block equalities, and it just remains to prove the 3 equalities of bottom-right blocks.
For this we use Lemma~\ref{lem:bottomright} and Lemma~\ref{lem:span}. Let us fix two distinct indices $k,l \in \{1,2,3\}$.
Since the top right blocks are identical in the two commuting extensions, it follows from Lemma~\ref{lem:bottomright} that
each of the pairs $(D_k,D_l)$ and $(D'_k,D'_l)$ is a solution to the linear system 
$$B_l X - B_k Y= A_k B_l - A_l B_k.$$
Lemma~\ref{lem:span} shows that the solution $(X,Y)$ is unique since $B_k,B_l$  are of rank $r-n$ and have their images in direct sum. This completes the proof of Theorem~\ref{th:topright}. 
\end{proof} 
The proof of Theorem~\ref{th:3unique} is also
complete now, except for the last part: if a commuting extension of size $r$ exists
in the algebraic closure $\overline{K}$,  there is already one in the ground field $K$. 
This follows from Remark~\ref{rem:field} at the end of Section~\ref{sec:3algo}.

\subsection{Uniqueness for more than 3 matrices} \label{sec:4unique}

Let us fix 3 distinct indices $k,l,m \leq p$.
Say that a tuple of matrices $(A_1,\ldots,A_p)$ of matrices of size $n$ satisfies hypothesis $(H_{klm})$
 if the triple $(A_k,A_l,A_m)$ satisfies the hypotheses of Theorem~\ref{th:3unique}, i.e., 
\begin{itemize}
\item[(i)] The three linear spaces $\Ima [A_k,A_l]$, $\Ima [A_k,A_m]$, $\Ima[A_l,A_m]$ are of dimension $2(r-n)$.
\item[(ii)] The three linear spaces $\Ima [A_k,A_l]+\Ima [A_k,A_m]$, $\Ima [A_l,A_k]+\Ima [A_l,A_m]$ and 
$\Ima [A_m,A_k]+\Ima [A_m,A_l]$ are of dimension $3(r-n)$.
\end{itemize}
Uniqueness results for more than 3 matrices can be obtained under various combinations of the $(H_{klm})$.
For instance: 
\begin{theorem} \label{th:unique}
Consider a tuple of matrices $(A_1,\ldots,A_p)$ of matrices of size~$n$ with $p \geq 3$ such that for all $2 \leq l \leq p$ there is some $ m \not \in \{1,l\}$ satisfying hypothesis $(H_{1lm})$. 
If $(A_1,\ldots,A_p)$ has a commuting extension of size $r$, it is essentially unique. 
Moreover, if a commuting extension of size $r$ exists
in the algebraic closure $\overline{K}$,  there is already one in the ground field $K$.
\end{theorem}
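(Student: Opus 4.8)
The plan is to reduce the statement to the three-matrix case, i.e. to Theorem~\ref{th:topright} and Corollary~\ref{cor:topright}, by restricting attention to the sub-triples $(A_1,A_l,A_m)$ supplied by the hypotheses $(H_{1lm})$. Two elementary remarks make this reduction legitimate. First, if $(Z_1,\ldots,Z_p)$ is a commuting extension of $(A_1,\ldots,A_p)$ of size~$r$, then for any three indices $1,l,m$ the sub-tuple $(Z_1,Z_l,Z_m)$ is a commuting extension of $(A_1,A_l,A_m)$: the three matrices still pairwise commute and each still has the block shape~(\ref{eq:block}) with the prescribed top-left corner. Second, the $GL_{r-n}$ action~(\ref{eq:action}) is carried out by conjugating all the $Z_i$ simultaneously by the single matrix $\diag(I_n,M)$; hence it commutes with passage to a sub-tuple and acts on each individual block exactly as in the three-matrix setting. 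Consequently, normalising one block of one sub-triple by this action normalises it consistently across all sub-triples.

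With these remarks in hand, I would take two commuting extensions $(Z_1,\ldots,Z_p)$ and $(Z'_1,\ldots,Z'_p)$ of size~$r$, written in block form as in~(\ref{eq:block}), and first normalise the block $B_1$. Applying the hypothesis with $l=2$, fix $m\notin\{1,2\}$ with $(H_{12m})$; the triple $(A_1,A_2,A_m)$ then satisfies the assumptions of Corollary~\ref{cor:topright}, so $\Ima(B_1)=\Ima[A_1,A_2]\cap\Ima[A_1,A_m]=\Ima(B'_1)$, and both blocks have full column rank $r-n$ by Lemma~\ref{lem:2direct}. Hence $B'_1=B_1M$ for some $M\in GL_{r-n}(K)$, and replacing $(Z'_1,\ldots,Z'_p)$ by $(\rho_{M^{-1}}(Z'_1),\ldots,\rho_{M^{-1}}(Z'_p))$ I may assume $B_1=B'_1$. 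As in the proof of Theorem~\ref{th:3unique}, it then suffices to show that the two extensions are literally equal. For each $l$ with $2\le l\le p$, choose $m=m(l)\notin\{1,l\}$ with $(H_{1lm})$: the triple $(A_1,A_l,A_m)$ meets the hypotheses of Theorem~\ref{th:topright}, the sub-tuples $(Z_1,Z_l,Z_m)$ and $(Z'_1,Z'_l,Z'_m)$ are commuting extensions of it, and $B_1=B'_1$. Theorem~\ref{th:topright} therefore gives $(Z_1,Z_l,Z_m)=(Z'_1,Z'_l,Z'_m)$, and in particular $Z_l=Z'_l$ (and, incidentally, $Z_1=Z'_1$). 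Letting $l$ range over $\{2,\ldots,p\}$ yields $Z_i=Z'_i$ for all $i$, which is exactly essential uniqueness.

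The descent to the ground field I would handle as in the three-matrix case. The block $B_1$ can be chosen over $K$ because $\Ima[A_1,A_2]\cap\Ima[A_1,A_m]$ is a $K$-rational subspace; once this choice is fixed, the construction behind Theorem~\ref{th:topright}---turned into an algorithm in Section~\ref{sec:algo}, parallel to Remark~\ref{rem:field}---determines every remaining block as the unique solution of a linear system with coefficients in $K$. Since a linear system consistent over $\overline{K}$ is already consistent over $K$, a commuting extension over $\overline{K}$ forces one over $K$. I do not expect a genuinely new difficulty here beyond the case $p=3$; the point that most deserves care is the bookkeeping of the first two paragraphs---making sure that restriction to a sub-triple and the blockwise $GL_{r-n}$ action are compatible, so that the single normalisation $B_1=B'_1$ is valid simultaneously for all the triples $(A_1,A_l,A_{m(l)})$.
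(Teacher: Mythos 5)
Your proposal is correct and follows essentially the same route as the paper: normalise $B_1$ via Corollary~\ref{cor:topright} applied to a triple $(A_1,A_2,A_m)$ with $(H_{12m})$, then apply Theorem~\ref{th:topright} to each triple $(A_1,A_l,A_{m(l)})$ to conclude $Z_l=Z'_l$, with the ground-field claim handled by the linear-systems argument of Remark~\ref{rem:field}. The extra bookkeeping you supply about compatibility of sub-triple restriction with the $GL_{r-n}$ action is left implicit in the paper but is exactly the right point to check.
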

For $p=3$, this theorem is equivalent to Theorem~\ref{th:3unique}.
\begin{proof}
Suppose that $(Z_1,\ldots,Z_p)$ and $(Z'_1,\ldots,Z'_p)$  are two commuting extensions of size $r$.
We need to show that $Z'_i = \rho_M(Z_i)$ for some $M \in GL_{r-n}(K)$.
We first apply the  argument appearing at the beginning of Section~\ref{sec:3unique}.
Indeed, by hypothesis  there exists some $m \geq 3$ such that $(H_{12m})$ holds.
 By Corollary~\ref{cor:topright} the top right blocks of $Z_1$ and $Z'_1$ have the same image.
 Hence there exists $M \ \in GL_{r-n}(K)$ such that $B'_1=B_1M$.\footnote{In fact, Theorem~\ref{th:3unique} applied
 to $(A_1,A_2,A_m)$ shows that $Z'_1=\rho_M(Z_1), Z'_2=\rho_M(Z_2), Z'_m=\rho_M(Z_m)$.}
 Applying $\rho_{M^{-1}}$ to the $Z'_1,\ldots,Z'_p$, we  reduce to the case $B'_1=B_1$.
To conclude, we use this block identity to show that $Z'_l=Z_l$ for all $l \in \{1,\ldots,p\}$. 
Indeed, for any~$l \geq 2$ there is some $m$ such that $(H_{1lm})$ holds. 
Since $B_1=B'_1$, it follows from Theorem~\ref{th:topright} applied to $(A_1,A_l,A_m)$ that $Z'_l=Z_l$ and $Z'_m=Z_m$.
Finally, as in Theorem~\ref{th:3unique} the last part of Theorem~\ref{th:unique} follows from Remark~\ref{rem:field}. 
\end{proof}
We will see in Section~\ref{sec:generic} that the generic situation is that  {\em all} of the hypotheses
$(H_{klm})$ hold simultaneously.

\section{Computing the extensions} \label{sec:algo}

\subsection{An algorithm for three matrices} \label{sec:3algo}

In this section we present and analyze an algorithm for the computation of commuting extensions of size $r$ of a triple of matrices 
$(A_1,A_2,A_3)$ of size~$n$ satisfying the hypotheses of Theorem~\ref{th:3unique} 
(in particular, we must have $r \leq 4n/3$). It can be viewed as an algorithmic version of the
proof of that theorem, and attempts to determine the 9 unknown blocks $B_i,C_i,D_i$ in the extension in the same order as in that proof.
We present this algorithm as a sequence of 9 steps:
\begin{enumerate}
\item Compute an arbitrary basis of $\Ima [A_1,A_2] \cap \Ima [A_1,A_3]$.\\ Let $B_1 \in M_{r,r-n}(K)$ be the matrix
whose column vectors are the elements of this basis.

\item Let $V_2 = \Ima [A_2,A_1] \cap \Ima [A_2,A_3]$. Compute an arbitrary basis of $V_2$, use it to write 
$[A_1,A_2] = M_1 - M_2$ where $\Ima(M_1) \subseteq  \Ima(B_1)$ and $\Ima(M_2) \subseteq V_2$.
Then compute a matrix $C_2$ such that $B_1C_2=M_2$.
\item  Let $V_3 = \Ima [A_3,A_1] \cap \Ima [A_3,A_2]$. Compute an arbitrary basis of $V_3$, use it to write 
$[A_2,A_3] = N_3 - N_2$ where $\Ima(N_3) \subseteq  V_3$ and $\Ima(N_2) \subseteq V_2$.
Then compute a matrix $B_3$ such that $B_3C_2=N_3$.
\item Write $[A_1,A_3]=P_3-P_1$ where $\Ima(P_3) \subseteq V_3$, $\Ima(P_1) \subseteq \Ima(B_1)$. 
Then compute a matrix $C_1$ such that $P_3=B_3C_1$.

\item Compute a matrix $C_3$ such that $P_1=B_1C_3$.

\item Compute a matrix $B_2$ such that $N_2=B_2C_3$.

\item Solve the linear system $B_3X-B_2Y=A_2B_3-A_3B_2$, then set $D_2=X,D_3=Y$.

\item Compute a matrix $D_1$ such that $B_2D_1-B_1D_2=A_1B_2-A_2B_1$.

\item Arrange the blocks $A_i,B_i,C_i,D_i$ in three matrices $Z_1,Z_2,Z_3$. If the $Z_i$ do not commute, reject. 
If they commute, output $(Z_1,Z_2,Z_3)$.
\end{enumerate}
If at any time in the above algorithm one of the linear systems cannot be solved, we reject. Also, if at any time in steps 2 to 8
the solution found for one of the blocks $B_i,C_i,D_i$ is not unique, we reject (note however that the solution for $B_1$
at Step 1 is not unique).

\begin{theorem} \label{th:3algo}
Let $(A_1,A_2,A_3)$ be a triple of matrices of size $n$ satisfying the hypotheses of Theorem~\ref{th:3unique}:
\begin{itemize}
\item[(i)] The three linear spaces $\Ima [A_1,A_2]$, $\Ima [A_1,A_3]$, $\Ima[A_2,A_3]$ are of dimension $2(r-n)$.
\item[(ii)] The three linear spaces $\Ima [A_1,A_2]+\Ima [A_1,A_3]$, $\Ima [A_2,A_1]+\Ima [A_2,A_3]$ and 
$\Ima [A_3,A_1]+\Ima [A_3,A_2]$ are of dimension $3(r-n)$.
\end{itemize}
The above algorithm runs in polynomial time and outputs  a commuting extension of size $r$  of $(A_1,A_2,A_3)$ 
if such an extension exists. Otherwise the algorithm rejects.
\end{theorem}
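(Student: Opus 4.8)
The plan is to establish two things: first, that the algorithm runs in polynomial time, and second, that it correctly decides whether a commuting extension of size $r$ exists and, when it does, outputs one. The polynomial-time claim is routine: every step consists of elementary linear algebra over $K$ --- computing images of commutators, intersecting two subspaces of $K^n$, solving a linear system of the form $BX = M$ or $B_lX - B_kY = A_kB_l - A_lB_k$, and finally checking pairwise commutation of three $r \times r$ matrices. All of these operations are polynomial in $n$ (recall $r \leq 4n/3$), so I would dispose of this part in a sentence or two, perhaps noting that testing uniqueness of a solution to a linear system is also polynomial (compare the rank of the coefficient matrix to the number of unknowns).

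For correctness I would argue in two directions. \textbf{Soundness.} If the algorithm outputs a tuple $(Z_1,Z_2,Z_3)$, then Step 9 has explicitly verified that the $Z_i$ pairwise commute, and by construction each $A_i$ sits in the top-left block of $Z_i$; hence the output is genuinely a commuting extension of size $r$. So the only thing to check on this side is that the algorithm never rejects spuriously --- that is, if a commuting extension of size $r$ exists, the algorithm does not reject. \textbf{Completeness.} Assume $(A_1,A_2,A_3)$ admits some commuting extension $(Z_1^\star,Z_2^\star,Z_3^\star)$ of size $r$, with blocks $B_i^\star,C_i^\star,D_i^\star$. I would show by induction on the step number that at each step the object the algorithm is asked to compute exists and is the unique solution of the corresponding linear system, and moreover that the blocks produced by the algorithm coincide with the blocks of an extension equivalent to $(Z_1^\star,Z_2^\star,Z_3^\star)$ under the $GL_{r-n}$ action. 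The key inputs here are exactly the structural facts proved earlier: Lemma~\ref{lem:2direct} and Proposition~\ref{prop:3direct} give the direct sum decompositions $\Ima[A_k,A_l] = \Ima(B_k^\star) \oplus \Ima(B_l^\star)$, the full-rank property of all off-diagonal blocks, and $\Ima(B_k^\star) = \Ima[A_k,A_l]\cap\Ima[A_k,A_m]$ (Corollary~\ref{cor:topright}); Lemma~\ref{lem:span} guarantees uniqueness of the matrices $X,Y$ in each system of the form $T = PX + QY$ with $\Ima(P),\Ima(Q)$ in direct sum and full rank; and Lemma~\ref{lem:bottomright} is what licenses Steps 7 and 8. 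Concretely: at Step 1 the chosen $B_1$ has $\Ima(B_1) = \Ima(B_1^\star M)$ for some $M \in GL_{r-n}(K)$, so after replacing the reference extension by its image under $\rho_{M^{-1}}$ we may assume $B_1 = B_1^\star$; then Steps 2 through 8 reproduce the argument in the proof of Theorem~\ref{th:topright} verbatim, each decomposition $[A_i,A_j] = (\text{term in }\Ima(B_i)) - (\text{term in }\Ima(B_j))$ being forced by Lemma~\ref{lem:sum}, and each subsequent block being recovered by cancelling a full-rank factor. Since each solution is unique, the "reject if not unique" clauses never fire, and since the reference extension is a witness, each linear system is solvable; hence the algorithm does not reject, and the tuple it assembles is precisely $(Z_1^\star,Z_2^\star,Z_3^\star)$ (up to the initial $\rho_M$), which commutes, so Step 9 accepts.

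Putting the two directions together gives the theorem: if an extension of size $r$ exists the algorithm outputs one (of minimal size, by the first part of Theorem~\ref{th:3unique}), and otherwise --- since any output is checked to commute in Step 9 --- it must reject.

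The part requiring the most care is the completeness induction, specifically tracking that the blocks named $M_1,M_2,N_2,N_3,P_1,P_3$ in the algorithm really do coincide with the corresponding products $B_iC_j^\star$ (equivalently $B_i^\star C_j^\star$) from the reference extension, so that the uniqueness hypotheses of Lemma~\ref{lem:span} genuinely apply at each stage. This is essentially a careful bookkeeping exercise mirroring the eight block-equality derivations in the proof of Theorem~\ref{th:topright}, reorganized so that the $n \times n$ commutator identities are read as linear systems for the unknown blocks rather than as equalities between two given extensions; I would present it as a step-by-step walk-through keyed to the nine numbered steps, invoking the earlier lemmas at each point rather than re-deriving the direct-sum structure. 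I would also remark that the genericity results of Section~\ref{sec:generic} ensure the hypotheses are not vacuous, and (as already flagged in the text) defer the "ground field suffices" refinement to Remark~\ref{rem:field}.
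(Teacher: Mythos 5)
Your proposal is correct and follows essentially the same route as the paper: normalize the reference extension by $\rho_M$ so that its top-right block of $Z_1$ matches the algorithm's choice at Step 1, then walk through Steps 2--8 using the reference extension as a witness for solvability and the direct-sum decompositions (Lemmas~\ref{lem:2direct}, \ref{lem:sum}, \ref{lem:span}, Proposition~\ref{prop:3direct}, Corollary~\ref{cor:topright}) for uniqueness, with Step 9 guaranteeing soundness. The paper's proof is exactly this bookkeeping keyed to the proof of Theorem~\ref{th:topright}, so no further comparison is needed.
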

By "polynomial time", we mean that the number of arithmetic operations and equality tests between elements of 
$K$  performed by the algorithm is polynomially bounded in $n$. Moreover, it can be shown that for $K=\mathbb{Q}$ the bit size of the numbers
involved in the computation remains polynomially bounded (in the bit size of the input).  This is due to the standard fact 
that the solutions of linear systems have a well-controlled bit size (we omit the details).
\begin{proof}[Proof of Theorem~\ref{th:3algo}]
By Step 9, the algorithm reject its input if it does not have any commuting extensions of size $r$.
Suppose now that a commuting extension $(Z_1,Z_2,Z_3)$  exists. We first follow the argument at the beginning 
of Section~\ref{sec:3unique}. 
By Corollary~\ref{cor:topright}, the top right block $B_1$ of $Z_1$ must satisfy $\Ima(B_1) = \Ima [A_1,A_2] \cap \Ima [A_1,A_3]$, and this image must be of dimension $r-n$
by Lemma~\ref{lem:2direct}. Moreover, by~(\ref{eq:action}), for any matrix $B'_1 \in M_{n,r-n}(K)$ having the same image 
there exists a commuting extension $(Z'_1,Z'_2,Z'_3)$ of size $r$ where $B'_1$ is the top right block of $Z'_1$.
We can therefore assume without loss of generality that the top right block of $Z_1$ is the matrix chosen by the algorithm
at Step 1.

The correctness of Step 2 follows from the second paragraph of the proof of Theorem~\ref{th:topright}. In particular, 
$V_2 = \Ima(B_2)$ (but the algorithm has not determined $B_2$ yet). As explained in the proof of Lemma~\ref{lem:sum},
writing $[A_1,A_2] = M_1 - M_2$ amounts to the decomposition of  $[A_1,A_2]$ as a sum of two matrices belonging
to two subspaces in direct sum. The solution is therefore unique and can be found by solving a linear system.
Note that we will in fact have  $\Ima(M_1) =  \Ima(B_1)$, $\Ima(M_2) = V_2$, $M_1=B_2C_1$.

The analysis of Step 3 is very similar to the above analysis of Step 2; see also (\ref{eq:23}) and the surrounding paragraph.
Note in particular that $V_3 = \Ima(B_3)$, and we have $V_2 = \Ima(B_2)$ from the analysis of Step 2. Moreover,
we will have $N_2=B_2C_3$.

The correctness of Step 4 follows from~(\ref{eq:13}) and the surrounding paragraph;  in particular, we have $P_1=B_1C_3$.
This equality is of course the justification for Step 5, and the equality $N_2=B_2C_3$ derived in the analysis of Step 3 is the justification for Step 6. 
The justification for Steps 7 and 8 is in the last paragraph of the proof of Theorem~\ref{th:topright}. This completes
the correctness proof.

Finally, we note that the algorithm consists in the resolution
of some fixed number of linear systems (this number is independent of $n$, and the size of the systems is at most quadratic in $n$).
Each system is set up using the solutions found at the previous steps.
 This shows that the running time is polynomially bounded in $n$.
 \end{proof}

\begin{remark} \label{rem:field}
When $A_1,A_2,A_3$ have their entries in some field $K$, the extension $Z_1,Z_2,Z_3$ computed by this algorithm
has also its entries in $K$. This is due to the fact that these entries are computed from the input by a sequence
of arithmetic operations (in particular, we do not need to compute polynomial roots or construct field extensions).
The same remark applies to the algorithm of Section~\ref{sec:4algo}.
\end{remark}

\subsection{More than three matrices} \label{sec:4algo}

In this section we present and analyze an algorithm for the computation of commuting extensions of size $r$
 of a tuple of matrices 
$(A_1,\ldots,A_p)$ where $p \geq 4$. We make the same assumptions on this tuple as in our uniqueness theorem for
$p \geq 3$ matrices (Theorem~\ref{th:unique}). As before, this restricts us to the range $r \leq 4n/3$.
The algorithm goes as follows (recall that hypothesis $(H_{klm})$ is defined at the beginning of Section~\ref{sec:4unique}):
\begin{enumerate}
\item Find $m$ such that $(H_{12m})$ holds.

\item Run steps 1 to 8 of the algorithm of Section~\ref{sec:3algo} on input $(A_1,A_2,A_m)$ to compute 
a candidate extension $(Z_1,Z_2,Z_m)$.
\item For $l$ from $3$ to $p$, if $Z_l$ has not been computed yet do the following:
\begin{itemize}
\item[(a)] Find $m$ such that $(H_{1lm})$ holds.
\item[(b)]  Run steps 2 to 7 of the algorithm of Section~\ref{sec:3algo} on input $(A_1,A_l,A_m)$ to compute 
a candidate extension $(Z_1,Z_l,Z_m)$. 
\end{itemize}
\item Output $(Z_1,\ldots,Z_p)$ if these matrices pairwise commute, reject if they don't.
\end{enumerate}

\begin{theorem} \label{th:algo}
Let $(A_1,,\ldots,A_p)$ be a triple of matrices of size $n$  such that there is for all $2 \leq l \leq p$ some $ m \not \in \{1,l\}$ satisfying hypothesis  $(H_{1lm})$.
The above algorithm runs in polynomial time and outputs  a commuting extension of size $r$  of $(A_1,A_2,A_3)$ 
if such an extension exists. Otherwise the algorithm rejects.
\end{theorem}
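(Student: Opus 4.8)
The plan is to mimic the correctness proof of Theorem~\ref{th:3algo}, using Corollary~\ref{cor:topright} to pin down $\Ima(B_1)$ and Theorem~\ref{th:topright} (essential uniqueness once $B_1$ is fixed) to force every block the algorithm computes. First the easy direction: if $(A_1,\ldots,A_p)$ has no commuting extension of size $r$ then, since the algorithm builds each $Z_i$ by placing $A_i$ in its upper left corner, any tuple $(Z_1,\ldots,Z_p)$ it manages to assemble — i.e. one it did not already reject because some linear system had no solution, or a non-unique one — cannot pairwise commute, since otherwise it would be a commuting extension of size $r$; hence Step~4 rejects.

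Now suppose a commuting extension of size $r$ exists. Let $m$ be the index found at Step~1 and let $B_1$ be the matrix chosen by the Section~\ref{sec:3algo} subroutine at its Step~1 during the run at Step~2, so the columns of $B_1$ form a basis of $\Ima[A_1,A_2]\cap\Ima[A_1,A_m]$. Since $(H_{12m})$ holds, Corollary~\ref{cor:topright} together with Lemma~\ref{lem:2direct} shows that the top right block of $Z_1$ in every commuting extension of size $r$ has image $\Ima[A_1,A_2]\cap\Ima[A_1,A_m]$, which is of dimension $r-n$. Exactly as in the proof of Theorem~\ref{th:unique}, applying a suitable $\rho_{M^{-1}}$ to the whole tuple we may therefore assume without loss of generality that there is a commuting extension $(Z_1,\ldots,Z_p)$ of size $r$ whose top right block of $Z_1$ equals the algorithm's $B_1$; fix this extension for the rest of the argument.

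The core step is then to check that the algorithm reconstructs exactly the blocks of this fixed $(Z_1,\ldots,Z_p)$. For the run at Step~2 this is the correctness statement of Theorem~\ref{th:3algo} applied to $(A_1,A_2,A_m)$: hypothesis $(H_{12m})$ is precisely the hypothesis of that theorem, and the reduction above makes the algorithm's $B_1$ agree with the true one, so Steps~1--8 of the subroutine return $Z_1,Z_2,Z_m$. For each subsequent $l$ for which $Z_l$ is not yet known, Step~3(a) produces $m$ with $(H_{1lm})$; then $(Z_1,Z_l,Z_m)$ is a commuting extension of size $r$ of $(A_1,A_l,A_m)$, and the top right block of $Z_1$ is still the fixed $B_1$, whose image equals $\Ima[A_1,A_l]\cap\Ima[A_1,A_m]$ by Corollary~\ref{cor:topright} applied to this new triple. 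Hence the part of the proof of Theorem~\ref{th:3algo} analysing Steps~2--7 of the subroutine (i.e. everything except the choice of $B_1$ at Step~1 and the recomputation of the $D_1$-block at Step~8) applies verbatim and shows that Step~3(b) computes precisely the blocks of $Z_1,Z_l,Z_m$; in particular each linear system is solvable with a unique solution, by Lemma~\ref{lem:span} and Theorem~\ref{th:topright}, so the algorithm does not reject spuriously. Running the loop over all $l$, every $Z_l$ with $2\le l\le p$ is produced — in its own iteration, or earlier as the last matrix of some triple, the two values agreeing by Theorem~\ref{th:topright} — and $Z_1$ was produced at Step~2. The resulting tuple equals $(Z_1,\ldots,Z_p)$, so it pairwise commutes and Step~4 outputs it: a genuine commuting extension of size $r$.

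For the running time, Step~1 and each Step~3(a) test $(H_{1lm})$ for the $O(p)$ candidate values of $m$, each test being a bounded number of rank and intersection-dimension computations on matrices of size $n$; the loop has at most $p$ iterations, each calling the Section~\ref{sec:3algo} subroutine, which by Theorem~\ref{th:3algo} solves a bounded number of linear systems of size $O(n^2)$; and Step~4 performs $O(p^2)$ products of matrices of size $r\le 4n/3$. So the number of arithmetic operations and equality tests is polynomial in $n$ and $p$, and (as for Theorem~\ref{th:3algo}, via Remark~\ref{rem:field}) the output lies in the ground field and has polynomially bounded bit size over $\qq$. The only non-routine point, and the one I would be most careful about, is the bookkeeping across overlapping triples: one must use Corollary~\ref{cor:topright} uniformly to see that the single $B_1$ fixed during Step~2 is the correct top right block for every triple $(A_1,A_l,A_m)$ used later, and invoke essential uniqueness (Theorem~\ref{th:topright}) to see that a block $Z_l$ computed twice — once as the middle and once as the last matrix of two different triples — comes out the same both times, so that no inconsistency or spurious rejection can occur.
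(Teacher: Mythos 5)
Your proposal is correct and follows essentially the same route as the paper: reduce via the $GL_{r-n}$ action so that the algorithm's chosen $B_1$ coincides with the top-right block of a fixed true extension, invoke Theorem~\ref{th:3algo} for the triple at Step~2, and then apply Theorem~\ref{th:topright} (with Corollary~\ref{cor:topright}) to each subsequent triple $(A_1,A_l,A_m)$ to force all remaining blocks and rule out spurious rejections or inconsistencies across overlapping triples. The paper's own proof is terser but makes exactly these points, including the bookkeeping you flag about not recomputing $Z_1$'s blocks and about matrices $Z_m$ already determined before the loop reaches $l=m$.
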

\begin{proof}
By Step 4, the algorithm reject its input if it does not have any commuting extensions of size $r$. Due to 
this global commutativity test, we do not have  run step 9 of the algorithm of Section~\ref{sec:3algo} 
(we could of course still perform these local commutativity tests, and reject early if one of them fails).

Suppose now that a commuting extension $(Z'_1,\ldots,Z'_p)$  exists. In this case, correctness follows the proof of 
Theorem~\ref{th:unique}. More precisely, by applying some $\rho_{M^{-1}}$ to $Z'_1,\ldots,Z'_p$, 
we can assume that the algorithm finds $Z_1=Z'_1,Z_2=Z'_2,Z_m=Z'_m$ at Step 2.
It will then compute $Z_3=Z'_3,\ldots,Z_p=Z'_p$ at Step 3.
At Step 3.(b) we do not run steps 1 and 8 of the algorithm of Section~\ref{sec:3algo} because their purpose is to determine
the blocks $B_1$ and $D_1$, and the whole matrix $Z_1$ has already been determined at this stage. For the same reason, we do not need to recompute $C_1$ when we run step 4 of the algorithm of Section~\ref{sec:3algo}  (but we do need to compute $P_1$ for subsequent use at step 5).
Finally, we note that some matrices $Z_m$ will already be computed before the loop counter at Step~3 reaches the value 
$l=m$ (for instance, at Step 1 we compute  $Z_m$ for some $m \geq 3$). It is of course not necessary to compute these matrices again,
and we can proceed immediately to the next iteration.
\end{proof}
At Step 4 of the algorithm we check that $p(-1)/2$ pairs of matrices commute. As a side remark, note that if randomization is allowed one can  just compute two random linear combinations of the $Z_i$, and check that this single pair commutes. The (simple) analysis of this randomized test can be found in~\cite[Lemma 1.6]{KoiranSaha23}.

\section{Some Generic Considerations} \label{sec:generic}

In this section $K$ is an infinite field. 
We will construct  tuples of matrices that satisfy the hypotheses of our uniqueness theorems
 and admit commuting extensions of size $r$. 
 In the previous sections, we began with a tuple $(A_1,\ldots,A_p)$ of matrices of size $n$ and studied the properties of its extensions.
 Here we proceed in reverse: we'll start from commuting matrices $(Z_1,\ldots,Z_p)$ of size $r$ and will define $A_i$
 as the top left block of size $n$ of the~$Z_i$.
 To make sure that the $Z_i$ commute, we  take them to be simultaneously diagonalisable. There are other ways of 
 constructing commuting matrices\footnote{From the point of view of algebraic geometry, this is due to the fact that: (i) the set of $p$-tuples of simultaneously diagonalisable  matrices is not closed; (ii) the closure of this set is in general not the only irreducible 
 component of the variety of  $p$-tuples of commuting matrices. See~\cite{jelisiejew22} for some recent progress on the
 description of the irreducible components.}  (see for instance Section~\ref{sec:nonunique}), but diagonalisable extensions
 turn out to be of particular interest in numerical analysis~\cite{degani05}, in algebraic complexity (see~\cite{strassen83}
 and the other references in~\cite{koiran20commuting}) and in quantum physics~\cite{degani06,orsucci15}.

 We will therefore pick diagonal matrices $D_1,\ldots,D_p$ of size $r$, an invertible matrix $R$ of size $r$,
 and  will set  $Z_i=R^{-1}D_iR$. The main result of this section is as follows.
 \begin{theorem}[Genericity Theorem] \label{th:generic}
 Let $Z_i=R^{-1}D_iR$ where $R \in GL_r(K)$ and where  $D_1,\ldots,D_p$ are diagonal matrices of size $r$.
 Let $A_i$ be the top left block of size $n$ of $Z_i$.
 The two following properties hold for a generic choice of $R$ and of the $D_i$:
 \begin{itemize}
 \item[(i)] If $p \geq 2$ and $r \leq 3n/2$, $\dim \Ima  [A_k,A_l] = 2(r-n)$ for all $k \neq l$.
 \item[(ii)] If $p \geq 3$ and $r \leq 4n/3$, $\dim (\Ima  [A_k,A_l] + \Ima [A_k,A_m] ) = 3(r-n)$ for any triple of distinct matrices
 $A_k,A_l,A_m$.
 \end{itemize}
 \end{theorem}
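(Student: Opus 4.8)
The plan is to prove both statements as instances of the general principle that a property defined by the non-vanishing of a polynomial (in the entries of $R$ and the diagonal entries of the $D_i$) holds generically as soon as it holds for a single choice of parameters. Concretely, for (i) the quantity $\dim\Ima[A_k,A_l]$ equals the rank of a fixed matrix whose entries are polynomials in the $(r^2 + pr)$ parameters; this rank is $\geq 2(r-n)$ on a Zariski-open set precisely when some $2(r-n)\times 2(r-n)$ minor is not identically zero, and it is automatically $\leq 2(r-n)$ by Lemma~\ref{lem:2direct}. So it suffices to exhibit \emph{one} tuple $(R,D_1,\ldots,D_p)$ for which $\dim\Ima[A_k,A_l] = 2(r-n)$, and then the open set where this holds is nonempty, hence dense since $K$ is infinite. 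The same reasoning handles (ii): $\dim(\Ima[A_k,A_l]+\Ima[A_k,A_m])$ is the rank of the fixed matrix obtained by concatenating the columns of $[A_k,A_l]$ and $[A_k,A_m]$, it is $\leq 3(r-n)$ by Proposition~\ref{prop:3direct}, and we only need one witness achieving equality. Finally, a finite intersection of nonempty Zariski-open sets (one per pair $k\neq l$ for (i), one per triple for (ii)) is again nonempty open, which gives the ``for all $k\neq l$'' and ``for any triple'' uniformity.

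The heart of the argument is therefore the construction of a single good witness. First I would fix $R = I_r$ and take the $D_i$ diagonal, so that $Z_i = D_i$ is already diagonal; then $A_i$ is the diagonal matrix $\diag(\lambda^{(i)}_1,\ldots,\lambda^{(i)}_n)$ formed from the first $n$ diagonal entries, all the $A_i$ commute, every commutator vanishes, and this witness is useless. So the real content is choosing $R$ to be a generic-looking but explicit invertible matrix and computing the resulting commutators. A clean choice is to conjugate by a permutation-type or unipotent $R$ that ``mixes'' the first $n$ coordinates with the last $r-n$, so that the off-diagonal blocks $B_i,C_i$ of $Z_i$ become controlled rank-$(r-n)$ matrices; then $[A_k,A_l] = B_lC_k - B_kC_l$ by~\eqref{eq:commutator}, and with the eigenvalues $\lambda^{(i)}_j$ chosen distinct and generic the two terms $B_lC_k$ and $B_kC_l$ will have images in direct sum (each of dimension $r-n$), giving $\dim\Ima[A_k,A_l] = 2(r-n)$ exactly as in~\eqref{eq:direct}. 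For (ii), one wants moreover $\Ima(B_k),\Ima(B_l),\Ima(B_m)$ in direct sum inside $K^n$, which requires $3(r-n)\leq n$, i.e. $r\leq 4n/3$ — this is where the hypothesis on $r$ enters. I would engineer $R$ so that $\Ima(B_i)$ is (generically) the span of a block of $r-n$ standard basis vectors depending on $i$ in a way that keeps the three blocks disjoint; a block-structured $R$ built from a Vandermonde or a companion-matrix pattern should do this.

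A cleaner alternative for the witness, which avoids hand-computing conjugates, is the following: rather than exhibiting an explicit $(R, D_i)$, show directly that the map from parameters to $\dim\Ima[A_k,A_l]$ cannot be identically below $2(r-n)$, by a dimension count. If $\dim\Ima[A_k,A_l] \leq 2(r-n) - 1$ identically on the irreducible variety of all $(R,D_i)$, then every tuple $(A_1,\ldots,A_p)$ arising this way (equivalently, every $(A_1,\ldots,A_p)$ that is the ``top-left corner'' of a simultaneously diagonalisable $r$-tuple) would satisfy $\rk[A_k,A_l]\leq 2(r-n)-1$; but by Lemma~\ref{lem:2direct} this would mean each such tuple has no commuting extension of size $< $ (something strictly smaller than $r$ would follow) — and one can contradict this by producing, for a single well-chosen diagonal $D_i$ and a single well-chosen $R$, a commutator of the maximal rank. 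So in the end one is still reduced to one explicit computation, but it can be made as low-dimensional as possible (e.g. take all $D_i$ supported on a $2(r-n)+\epsilon$-dimensional coordinate subspace and reduce to a small fixed-size verification). The main obstacle is precisely this: writing down one explicit $R$ and one explicit family of eigenvalues and \emph{checking} that the relevant minor is nonzero, uniformly in $n$ and $r$ in the allowed range; the Zariski-topology wrapper around it (nonempty open $\Rightarrow$ dense, finite intersections of dense opens are dense) is routine and I would dispatch it in a sentence.
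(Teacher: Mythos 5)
Your outer scaffolding is exactly the paper's (Remarks~\ref{rem:construct} and~\ref{rem:cont}, Section~\ref{sec:genproof}): each rank condition is bounded above by Lemma~\ref{lem:2direct} and Proposition~\ref{prop:3direct}, is Zariski-open by lower semi-continuity of rank, hence it suffices to produce one witness per pair/triple and take a finite intersection of nonempty open sets. The gap is that the witness --- which is the entire technical content of the theorem --- is never produced; ``a block-structured $R$ built from a Vandermonde or companion-matrix pattern should do this'' and your second ``dimension count'' alternative both circle back to an explicit computation you do not carry out. The paper fills this hole without ever writing an explicit $R$. Writing $R=(V'\,|\,V)$ and letting $U$ be the first $n$ rows of $R^{-1}$, one has $B_i=UD_iV$ and $\Ima(V)=\ker(U)$ (Lemma~\ref{lem:inverse}). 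The key step (Lemma~\ref{lem:polymethod}) is that $\det(DV\,|\,W)$ is a multilinear homogeneous polynomial in the diagonal entries of $D$ whose coefficient of $\prod_{i\in I}d_i$ is $\pm(\det V_I)(\det W_{[r]\setminus I})$ by the generalized Laplace expansion; since a generic $R$ has all minors of $V$ nonzero, one may take $D_1,D_2$ ($D_1,D_2,D_3$ for part (ii)) to be $0/1$ diagonal matrices supported on disjoint row sets so that $(D_2V\,|\,D_1V\,|\,V)$ has full column rank $3(r-n)$ (Corollaries~\ref{cor:full} and~\ref{cor:full2}); the relation $\Ima(V)=\ker(U)$ then converts the direct sum of $\Ima(D_1V)$, $\Ima(D_2V)$, $\ker(U)$ into the direct sum of $\Ima(B_1)$ and $\Ima(B_2)$ (Propositions~\ref{prop:directsum} and~\ref{prop:directsum2}), and semi-continuity upgrades the rank-deficient $0/1$ choice to a generic one. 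This is the idea your proposal is missing.

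There is also a genuine logical slip in your rank computation: knowing that $B_lC_k$ and $B_kC_l$ each have rank $r-n$ and images in direct sum does \emph{not} give $\rk(B_lC_k-B_kC_l)=2(r-n)$. For instance $P=e_1e_1^T$ and $Q=e_2e_1^T$ have rank one and independent images, yet $P-Q$ has rank one, because their kernels coincide. The direct sum of images only yields $\ker(B_lC_k-B_kC_l)=\ker C_k\cap\ker C_l$; to conclude one also needs $\Ima(C_k^T)\oplus\Ima(C_l^T)$ to have dimension $2(r-n)$, so that $\dim(\ker C_k\cap\ker C_l)=n-2(r-n)$. The paper obtains this dual condition for free by applying the same direct-sum argument to $Z_i^T=R^TD_iR^{-T}$ (see the remark before Theorem~\ref{th:rank2}). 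Your sketch never mentions any condition on the $C_i$, so even granting a witness with the desired $B_i$, the argument for (i) as written would not close.
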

 The term "generic" in this theorem has the standard algebro-geometric meaning. Namely, we say that some property
 $\cal P$ holds generically in $K^N$ if there is a nonempty Zariski open subset $O$ of $K^N$ such that $\cal P$ holds
 for all $x \in O$; or equivalently, if 
 the set of points of $K^N$ that do not satisfy $\cal P$ is included in the zero set of some non-identically zero polynomial
 $P \in K[X_1,\ldots,X_N]$. 
 In Theorem~\ref{th:generic}, the points $x \in K^N$ represent tuples of matrices $(R,D_1,\ldots,D_p)$.
 We can therefore take $N=r^2+pr$.
 If each of the properties in a finite list holds generically, then their conjunction also holds generically. 
 We will use this simple fact repeatedly in what follows.
 \begin{remark} \label{rem:construct}
 In order to prove Theorem~\ref{th:generic}, it would be enough to exhibit 
  a {\em single} tuple $\Pi=(R,D_1,\ldots,D_p)$ satisfying
 (i) and (ii). Indeed, the set of matrices $M \in M_{q,s}(K)$ with $\rk M \geq \rho$ is Zariski open in $K^{qs}$ for any $\rho$
 (it is defined by the non-vanishing of some minor of size $\rho$).
 As a result, if (i) and (ii) hold for some tuple $\Pi$ then  $\dim \Ima  [A_k,A_l] \geq 2(r-n)$ and 
 $\dim (\Ima  [A_k,A_l] + \Ima [A_k,A_m] ) \geq 3(r-n)$ hold generically
  (these inequalities hold in a Zariski open set containing $\Pi$). By  Lemma~\ref{lem:2direct} and Proposition~\ref{prop:3direct}, $2(r-n)$ and $3(r-n)$ are the maximum possible dimensions for the respective subspaces.
These inequalities must therefore be equalities.
 \end{remark}
 \begin{remark} \label{rem:cont}
 Our proof of Theorem~\ref{th:generic} is not constructive in the sense of Remark~\ref{rem:construct}, but we use in the same way as in this remark the fact 
 that $\{M; \rk M \geq \rho\}$ is an open set. This fact is often referred to as the  "lower semi-continuity of matrix rank."
 \end{remark}
 \begin{remark} \label{rem:finite}
 If $K$ is a finite field of large enough size (compared to $n$), it is still true that there exist inputs that satisfy the assumption 
 of our uniqueness theorems and admit commuting extensions of size $r$. This is due to the fact that the set of "bad" tuples
 $(R,D_1,\ldots,D_p)$ in the construction of Theorem~\ref{th:generic} is included in the zero set of a polynomial $P {\not \equiv} 0$ of degree
 at most $d(n)$, where $d(n)$ depends only on $n$. If $|K|$is large enough, there will be points in $K^N$ where~$P$
 does not vanish, as shown by e.g. the Schwartz-Zippel lemma (in fact, the proportion of points of $K^N$ where $P$ vanishes will be at most equal to $d(n)/|K|$). The characterization of matrix rank by vanishing minors shows that $d(n)$ grows slowly (polynomially) as a function of $n$.
 \end{remark}
 The remainder of this section is devoted to the proof of Theorem~\ref{th:generic}
 and we complete it in Section~\ref{sec:genproof}.
 Let $V'$ be the matrix made of the first $n$ columns of $R$, and $V$ the matrix made of the last $r-n$ columns.
 Let $U$  be the matrix made of the first $n$ rows of $R^{-1}$, and $U'$ the matrix made of the remaining $r-n$ rows.
 We have for $Z_i$ the block structure:
 \begin{equation} \label{eq:block2}
Z_i=
\begin{pmatrix}
A_i & B_i \\
C_i & D_i
\end{pmatrix} = 
\begin{pmatrix}
UD_iV' & UD_iV \\
U'D_iV' & U'D_iV
\end{pmatrix}
\end{equation}
 As an immediate consequence of the way $U,U',V,V'$ are constructed:
 \begin{lemma} \label{lem:inverse}
 The matrices $U$ and $V'$ are of rank $n$; $U'$ and $V$ are of rank $r-n$. We have $UV'=I_n$, $U'V'=I_{r-n}$,
 $UV=0$, $U'V'=0$. Moreover, $\Ima(V)=\ker(U)$ and $\Ima(V')=\ker(U')$.
 \end{lemma}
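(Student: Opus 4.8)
The plan is to extract everything directly from the identity $R^{-1}R = RR^{-1} = I_r$, read block by block according to the partition of $R$ into its first $n$ and last $r-n$ columns, and of $R^{-1}$ into its first $n$ and last $r-n$ rows. First I would record that $R$ is invertible, so its columns are linearly independent; since $V'$ consists of $n$ of these columns it has rank $n$, and since $V$ consists of the other $r-n$ columns it has rank $r-n$. Symmetrically, the rows of $R^{-1}$ are linearly independent, so $U$ (its first $n$ rows) has rank $n$ and $U'$ (its remaining $r-n$ rows) has rank $r-n$.

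Next I would compute $R^{-1}R$ in block form. Writing $R = \begin{pmatrix} V' & V \end{pmatrix}$ (by columns) and $R^{-1} = \begin{pmatrix} U \\ U' \end{pmatrix}$ (by rows), the product $R^{-1}R$ has blocks $UV'$, $UV$, $U'V'$, $U'V$, and equating with $I_r$ gives $UV' = I_n$, $U'V = I_{r-n}$, $UV = 0$ and $U'V' = 0$. (I note that the displayed statement writes ``$U'V'=I_{r-n}$'' and ``$U'V'=0$''; these are typos for $U'V = I_{r-n}$ and $U'V' = 0$ respectively, and the corrected identities are exactly what the block computation yields.)

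Finally, for the kernel statements: from $UV = 0$ we get $\Ima(V) \subseteq \ker(U)$, and since $\rk V = r-n$ while $\dim \ker(U) = r - \rk U = r - n$ (as $U$ has $n$ rows and full row rank), the inclusion is an equality, $\Ima(V) = \ker(U)$. The same argument with $U'V' = 0$, $\rk V' = n$ and $\dim \ker(U') = r - (r-n) = n$ gives $\Ima(V') = \ker(U')$.

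There is no real obstacle here: the whole lemma is a bookkeeping consequence of $RR^{-1} = I_r$ together with the rank–nullity theorem, and the only point requiring a moment's care is matching the block indices correctly (and noticing the two typos in the statement).
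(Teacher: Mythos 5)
Your proof is correct and is exactly the argument the paper has in mind (the paper states the lemma as an ``immediate consequence'' of the construction and omits the proof): the block form of $R^{-1}R=I_r$ gives the four identities, and rank--nullity upgrades $UV=0$ and $U'V'=0$ to the kernel equalities. You are also right that ``$U'V'=I_{r-n}$'' in the statement is a typo for $U'V=I_{r-n}$, since the statement would otherwise contradict the (correct) identity $U'V'=0$ listed alongside it.
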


\subsection{One commutator} \label{sec:1com}

 If $A$ is a matrix with $r$ rows and $I \subseteq [r]$, we will denote by $A_I$ the submatrix of $A$ made of the rows  indexed with the elements of $I$.
\begin{lemma} \label{lem:polymethod}
Fix two matrices $V \in M_{r,r-n}(K)$ and $W \in M_{r,n}(K)$. 
Let us denote by $M(d_1,\ldots,d_r) \in M_{r}(K)$ the matrix 
having as its first $r-n$ columns the columns of $\diag(d_1,\ldots,d_r)V$, and as its last $n$ columns the columns of $W$.
In block notation, 
$M(d_1,\ldots,d_r)= ( DV\, |\, W)$ where $D=\diag(d_1,\ldots,d_r).$ The two following properties  are equivalent:
\begin{itemize}
\item[(i)] There exists $d_1,\ldots,d_r \in K$ such that $M(d_1,\ldots,d_r)$ is invertible.
\item[(ii)] There is a partition $[r]=I \cup J$ with $|I| = r-n$, $|J| = n$ such that 
$V_I$ and $W_J$ are invertible. 
\end{itemize}
When these two properties hold, $M(d_1,\ldots,d_r)$ is invertible for the specific choice: $d_i=1$ when $i \in I$,
$d_i = 0$ when $i \in J$.
\end{lemma}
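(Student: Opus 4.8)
The plan is to prove the two implications separately, with the ``hard'' direction being (i)$\Rightarrow$(ii) and the ``easy'' direction (together with the final sentence) being (ii)$\Rightarrow$(i). For the implication (ii)$\Rightarrow$(i) I would simply exhibit the claimed witness: given a partition $[r]=I\cup J$ with $|I|=r-n$, $|J|=n$ such that $V_I$ and $W_J$ are invertible, set $d_i=1$ for $i\in I$ and $d_i=0$ for $i\in J$. Then $D=\diag(d_1,\ldots,d_r)$ is the diagonal projector onto the coordinates indexed by $I$, so the rows of $DV$ indexed by $J$ vanish while those indexed by $I$ coincide with $V_I$. After a simultaneous permutation of rows (and the corresponding reordering of columns, which I would keep track of carefully to avoid sign/indexing confusion), the matrix $M(d_1,\ldots,d_r)=(DV\,|\,W)$ becomes block upper-triangular of the form $\left(\begin{smallmatrix} V_I & * \\ 0 & W_J\end{smallmatrix}\right)$, whose determinant is $\det(V_I)\det(W_J)\neq 0$. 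This simultaneously establishes (i) and the concluding assertion about the specific choice of the $d_i$.

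For (i)$\Rightarrow$(ii) I would argue by a Cauchy--Binet / determinant-expansion computation, which is the main obstacle and the reason the lemma is phrased as an equivalence rather than a triviality. Treat $d_1,\ldots,d_r$ as indeterminates and expand $\det M(d_1,\ldots,d_r)$ by multilinearity along the first $r-n$ columns. Each column of $DV$ is $\sum_{i=1}^{r} d_i\, (V_{i,\cdot})^{T}\, e_i$-type combination; grouping terms, the Laplace/Cauchy--Binet expansion of the determinant along the first $r-n$ columns yields
\begin{equation} \label{eq:polyexp}
\det M(d_1,\ldots,d_r)=\sum_{\substack{I\subseteq[r]\\ |I|=r-n}} \Big(\prod_{i\in I} d_i\Big)\,\varepsilon_I\,\det(V_I)\,\det(W_{[r]\setminus I}),
\end{equation}
where $\varepsilon_I\in\{+1,-1\}$ is the sign of the shuffle of $I$ and $[r]\setminus I$. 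Hypothesis (i) says this polynomial in $d_1,\ldots,d_r$ is not identically zero; the monomials $\prod_{i\in I}d_i$ for distinct $I$ are linearly independent, so some coefficient must be nonzero, i.e.\ $\det(V_I)\det(W_{[r]\setminus I})\neq 0$ for some $I$ of size $r-n$. Taking $J=[r]\setminus I$ gives exactly the partition required in (ii), with $V_I$ and $W_J$ invertible.

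The only genuinely delicate point is getting expansion~\eqref{eq:polyexp} right: one must check that the determinant of the matrix whose first $r-n$ columns are $\diag(d_i)V$ and whose last $n$ columns are $W$, expanded by multilinearity in the first block of columns, really does collapse to a sum over $(r-n)$-subsets $I$ with the product $\prod_{i\in I}d_i$ (and not, say, higher powers of the $d_i$) — this is because each $d_i$ appears linearly in each of the first $r-n$ columns, and any monomial surviving the determinant must pick a distinct row from each of those columns, hence is square-free and supported on exactly $r-n$ indices. Once this bookkeeping is in place, both implications follow. I would not belabor the exact value of $\varepsilon_I$, since only the non-vanishing of the coefficients matters. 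This also makes transparent why the specific $0/1$ choice of the $d_i$ in the last sentence works: it selects precisely the monomial $\prod_{i\in I}d_i$ and kills all others.
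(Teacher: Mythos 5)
Your proposal is correct and follows essentially the same route as the paper: expand $\det M(d_1,\ldots,d_r)$ as a multilinear polynomial in the $d_i$ via the generalized Laplace expansion along the first $r-n$ columns, identify the coefficient of $\prod_{i\in I}d_i$ as $\pm\det(V_I)\det(W_{[r]\setminus I})$, and verify the $0/1$ specialization by reducing to a block upper-triangular matrix. No gaps.
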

\begin{proof}
Consider the polynomial $P(d_1,\ldots,d_r) = \det M(d_1,\ldots,d_r)$.
The first property holds true if and only if  $P$ is not identically 0, i.e., if some monomial
appears in $P$ with a nonzero coefficient. Note that $P$ is a homogeneous multilinear polynomial of degree $r-n$ in the variables $d_1,\ldots,d_r$.
Applying the generalized Laplace expansion along the first $r-n$ columns of $\det M(d_1,\ldots,d_r)$ shows
that the coefficient of the monomial $\prod_{i \in I} d_i$ is equal up to the sign to the product 
$(\det V_I)(\det W_{[r] \setminus I})$. Properties (i) and (ii) are therefore equivalent.

Let us now set $d_i=1$ for $i \in I$, $d_i=0$ otherwise. Suppose without loss of generality that $I = [r-n]$. 
Then $M(d_1,\dots,d_r)$ is of block form
$\begin{pmatrix}
* & * \\
0 & *
\end{pmatrix}$ where the two diagonal blocks are invertible. The block matrix is therefore invertible as well.
\end{proof}

\begin{corollary} \label{cor:full}
Suppose that $r \leq 3n/2$ and that $V \in M_{r,r-n}(K)$ contains at least 3 nonzero row-disjoint minors of size $r-n$.
There exist two diagonal matrices $D_1,D_2$ of size $r$ such that the block matrix $M = ( D_2V \, |\,  D_1V \, |\, V)$ is of full column rank.
\end{corollary}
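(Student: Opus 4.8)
The plan is to prove the slightly sharper statement that $D_1,D_2$ may be taken to be $0/1$ diagonal matrices, by exhibiting a single set $S$ of $3(r-n)$ rows of $M$ on which the corresponding square submatrix is invertible. Since $M$ has $r$ rows and $3(r-n)$ columns, and $3(r-n)\le r$ holds exactly because $r\le 3n/2$, the existence of one invertible $3(r-n)\times 3(r-n)$ submatrix forces $M$ to have full column rank $3(r-n)$. This is in the same spirit as Lemma~\ref{lem:polymethod}: the relevant maximal minor of $M$ is a multilinear polynomial in the diagonal entries of $D_1$ and $D_2$, and we will simply write down a $0/1$ evaluation that does not kill it.

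First I would unpack the hypothesis. The assumption that $V$ contains $3$ nonzero row-disjoint minors of size $r-n$ means there are pairwise disjoint subsets $I_1,I_2,I_3\subseteq[r]$, each of cardinality $r-n$, such that the three square submatrices $V_{I_1},V_{I_2},V_{I_3}$ are invertible. Disjointness gives $|I_1\cup I_2\cup I_3|=3(r-n)$, and this is $\le r$ precisely when $r\le 3n/2$; this is the only place the hypothesis on $r$ enters. Set $S=I_1\cup I_2\cup I_3$.

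Next comes the explicit construction. Let $D_2$ be the diagonal matrix with entry $1$ in each position of $I_1$ and $0$ elsewhere, and let $D_1$ be the diagonal matrix with entry $1$ in each position of $I_2$ and $0$ elsewhere. Consider the submatrix $M_S$ of $M=(\,D_2V\mid D_1V\mid V\,)$ obtained by keeping only the rows indexed by $S$, and group those rows in the order $I_1$, then $I_2$, then $I_3$ (this is a row permutation and changes the determinant only by a sign). Because $D_2$ annihilates every row outside $I_1$ and $D_1$ annihilates every row outside $I_2$, the submatrix becomes block upper triangular,
\[
M_S=\begin{pmatrix} V_{I_1} & 0 & V_{I_1}\\ 0 & V_{I_2} & V_{I_2}\\ 0 & 0 & V_{I_3}\end{pmatrix},
\]
so $\det M_S=\pm(\det V_{I_1})(\det V_{I_2})(\det V_{I_3})\neq 0$. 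Hence $M_S$ is invertible, $\rk M\ge 3(r-n)$, and therefore $M$ has full column rank.

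Finally, on where the work lies: there is essentially no obstacle. The only point requiring a moment's care is matching up which row block $I_j$ is paired with which diagonal support so that the retained submatrix is triangular with the three prescribed invertible blocks down the diagonal; once that block picture is set up, the computation is immediate, and the hypothesis $r\le 3n/2$ is exactly what is needed to fit $S$ inside $[r]$. An alternative write-up would keep the diagonal entries as indeterminates and invoke the multilinearity argument of Lemma~\ref{lem:polymethod} directly, but the explicit triangular evaluation above is the shortest route.
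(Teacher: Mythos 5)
Your proof is correct and follows essentially the same route as the paper: choose $0/1$ diagonal matrices supported on two of the three disjoint row sets, restrict to the rows of $I_1\cup I_2\cup I_3$ (which fits inside $[r]$ exactly because $r\le 3n/2$), and observe that the resulting square submatrix is block upper triangular with the three invertible blocks $V_{I_1},V_{I_2},V_{I_3}$ on the diagonal. Your write-up is if anything slightly more explicit than the paper's about which submatrix is being taken and why its determinant factors.
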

\begin{proof}
First, note that $M \in M_{r,3(r-n)}(K)$ so $M$ does not have more columns than rows if  $r \leq 3n/2$.
Suppose that the rows of the nonzero minors are indexed by the disjoint subsets $I,J,K \subseteq [r]$. 
Set to 1 all the diagonal entries of $D_1$ indexed by an element of $I$, and set all the others to 0.
Likewise, set to 1 all the diagonal entries of $D_2$ indexed by an element of $K$, and all the others to 0.
The  matrix $M= ( D_2V \, |\,  D_1V \, |\, V)$ is 
of row rank $3(r-n)$. This follows directly from 
the structure of $M$ as a $3 \times 3$ block matrix. Suppose indeed that $K=[r-n]$ and that $I$ is made of the next $r-n$ integers. Then the block structure is upper triangular and the 3 blocks on the diagonal are of rank $r-n$ (this generalizes
the block structure of $M(d_1,\ldots,d_r)$ in the last paragraph of the proof of Lemma~\ref{lem:polymethod}).
Alternatively, $\rk M = 3(r-n)$ follows  from  two successive applications
of Lemma~\ref{lem:polymethod}, beginning with $W=V_{I \cup J \cup K}$.
Since $M$ has $3(r-n)$ columns, it is  of full column rank.
\end{proof}
In the above construction, $D_1$ and $D_2$ are not full rank since each matrix has only $r-n$ nonzero entries.
Nevertheless, once we know that the conclusion of the Corollary holds {\em  for some choice} of $D_1,D_2$,
it must hold also for some pair of invertible matrices. This follows from the fact that $M$ lies in the (open)
  set of matrices of full rank (see also Remarks~\ref{rem:construct} and~\ref{rem:cont}).
   This fact will be useful for the next proposition.
\begin{proposition} \label{prop:directsum}
If $r \leq 3n/2$ one can choose $R \in GL_r(K)$ and two diagonal matrices $D_1,D_2$ of size $r$ 
so that the two top-right blocks $B_1,B_2$ of
$Z_1,Z_2$ in~(\ref{eq:block2})  are of rank $r-n$ and have their images in direct sum.
\end{proposition}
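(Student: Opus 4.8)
The plan is to obtain the matrix $R$ by first choosing its last $r-n$ columns $V$, together with $D_1$ and $D_2$, via Corollary~\ref{cor:full}, and then transporting the resulting rank statement through the map furnished by $U$, the first $n$ rows of $R^{-1}$. Recall from~(\ref{eq:block2}) that $B_i = U D_i V$, so that $\Ima B_i = U\,\Ima(D_iV)$ and hence $\Ima B_1 + \Ima B_2 = UW$ where $W := \Ima(D_1V) + \Ima(D_2V) \subseteq K^r$.

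First I would fix $V \in M_{r,r-n}(K)$ containing at least three nonzero row-disjoint minors of size $r-n$; since $r \leq 3n/2$ we have $3(r-n) \leq r$, so this is possible — either by taking a generic $V$ (for which all maximal minors are nonzero and one then picks any three disjoint row-index sets of size $r-n$) or by an explicit choice, e.g. stacking three copies of $I_{r-n}$ and padding with $3n-2r$ zero rows. In particular $\rk V = r-n$. Corollary~\ref{cor:full} then supplies diagonal matrices $D_1,D_2$ of size $r$ for which the block matrix $(D_2V \mid D_1V \mid V) \in M_{r,3(r-n)}(K)$ has full column rank $3(r-n)$, i.e. its $3(r-n)$ columns are linearly independent. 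I would then extend $V$ to an invertible matrix $R = (V' \mid V) \in GL_r(K)$ by completing the columns of $V$ to a basis of $K^r$, and set $Z_i = R^{-1}D_iR$, so that $U,U',V,V'$ and $B_i = UD_iV$ are exactly as in~(\ref{eq:block2}) and Lemma~\ref{lem:inverse}.

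Then I would verify the two conclusions directly. Linear independence of the columns of $(D_2V \mid D_1V \mid V)$ gives both $\dim W = 2(r-n)$ and $W \cap \Ima(V) = \{0\}$. By Lemma~\ref{lem:inverse} we have $\ker U = \Ima(V)$, so $U$ is injective on $W$, whence $\dim(\Ima B_1 + \Ima B_2) = \dim(UW) = \dim W = 2(r-n)$. Since each $B_i$ has only $r-n$ columns, $\rk B_i \leq r-n$, so from $2(r-n) = \dim(\Ima B_1 + \Ima B_2) = \rk B_1 + \rk B_2 - \dim(\Ima B_1 \cap \Ima B_2) \leq \rk B_1 + \rk B_2 \leq 2(r-n)$ all inequalities must be equalities: $\rk B_1 = \rk B_2 = r-n$ and $\Ima B_1 \cap \Ima B_2 = \{0\}$, i.e. the two images are in direct sum.

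The argument is short once Corollary~\ref{cor:full} is available; the only point requiring care is that applying $U$ — that is, passing from the large matrix over $K^r$ down to the blocks $B_i$ living in $K^n$ — could a priori collapse the direct sum. This is precisely ruled out by the transversality $W \cap \ker U = W \cap \Ima(V) = \{0\}$, which is exactly why Corollary~\ref{cor:full} is stated with the additional block $V$ appended. I do not expect any serious obstacle beyond keeping this bookkeeping straight.
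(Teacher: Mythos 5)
Your proof is correct and follows essentially the same route as the paper: Corollary~\ref{cor:full} supplies $D_1,D_2$ making $(D_2V\,|\,D_1V\,|\,V)$ of full column rank, and the identity $\ker(U)=\Ima(V)$ from Lemma~\ref{lem:inverse} transports the resulting direct sum down to the blocks $B_i=UD_iV$. Your concluding dimension count is in fact slightly cleaner than the paper's element-chasing argument, since it yields $\rk B_i=r-n$ for free and thereby avoids the paper's extra step of replacing $D_1,D_2$ by invertible diagonal matrices.
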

\begin{proof}
We will take $D_1$ and $D_2$ as in Corollary~\ref{cor:full}. For a generic $R \in GL_r(K)$ all minors of all sizes are nonzero,
including those of $V$; Corollary~\ref{cor:full} is therefore applicable. 
Recall from Lemma~\ref{lem:inverse} that $\Ima(V)=\ker(U)$. The 3 subspaces $\Ima(D_1V)$, $\Ima(D_2V)$, $\ker(U)$
are therefore in direct sum. This implies that $\Ima(B_1)=\Ima(UD_1V)$ is in direct sum with $\Ima(B_2)=\Ima(UD_2V)$.
Suppose indeed that $y=UD_1Vx_1=UD_2Vx_2$ for some $x_1,x_2 \in K^{r-n}$. 
Note that $x=D_1Vx_1-D_2Vx_2 \in \ker U$. Since $x \in \Ima(D_1V) \oplus \Ima(D_2V)$ and 
$\Ima(D_1V)$, $\Ima(D_2V)$, $\ker(U)$
are  in direct sum, we conclude that $x=0$. Therefore, $z=D_1Vx_1=D_2Vx_2 \in \Ima(D_1V) \cap \Ima(D_2V)$.
These two subspaces being in direct sum, we must have $z=0$. Hence $y=Uz=0$, and we have shown that 
$\Ima(B_1)$ is in direct sum with $\Ima(B_2)$.

Finally, to ensure that $B_1=UD_1V$ is of rank $r-n$ note that $\ker(B_1)=\ker(D_1V)$ since $\ker U$ and $\Ima D_1V$ are in
direct sum. Hence $\ker(B_1)=\{0\}$ if $D_1$ and $V$ are of full rank; a similar argument applies for $B_2$.
As pointed out before Proposition~\ref{prop:directsum},   it is indeed possible to take $D_1$ and $D_2$ of full rank in 
Corollary~\ref{cor:full}.
Moreover, as pointed out at the beginning of the proof of the proposition, 
$V$ is of full (column) rank for a generic $R \in GL_r(K)$.
\end{proof}
The above proof shows that the conclusion of the proposition holds true for a generic choice of $R$, $D_1$, $D_2$.
Moreover, it applies generically not only to $B_1,B_2$ but also to $C_1^T$, $C_2^T$: for a generic choice of $R$, $D_1$, $D_2$, the images of $C_1^T$ and $C_2^T$ are in direct sum and of dimension $r-n$.
This follows for instance from the fact that the $C_i^T$ are the top right blocks of $Z_i^T=R^TD_iR^{-T}$,
and we can apply Proposition~\ref{prop:directsum} to the $Z_i^T$ instead of the $Z_i$.
This remark will be useful for the next theorem.
\begin{theorem} \label{th:rank2}
If $r \leq 3n/2$ one can choose $R \in GL_r(K)$ and two diagonal matrices $D_1,D_2$ of size $r$ 
so that the two top-left blocks $A_1,A_2$ of $Z_1,Z_2$ in~(\ref{eq:block2}) satisfy $\dim \Ima [A_1,A_2] = 2(r-n)$.
\end{theorem}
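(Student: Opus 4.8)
The plan is to combine the generic direct-sum properties for the off-diagonal blocks, already established in Proposition~\ref{prop:directsum} and the remark following it, with a short rank computation based on the commutator identity of Lemma~\ref{lem:2direct}.

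By Lemma~\ref{lem:2direct} we have $[A_1,A_2] = B_2C_1 - B_1C_2$ and $\dim\Ima[A_1,A_2] = \rk[A_1,A_2] \le 2(r-n)$, so it suffices to exhibit a choice of $R \in GL_r(K)$ and diagonal matrices $D_1,D_2$ for which $\rk[A_1,A_2] \ge 2(r-n)$. By Proposition~\ref{prop:directsum} and the remark after it, for a generic choice of $R,D_1,D_2$ the blocks $B_1,B_2$ have rank $r-n$ with $\Ima(B_1)$ and $\Ima(B_2)$ in direct sum, and simultaneously $C_1^T,C_2^T$ have rank $r-n$ with $\Ima(C_1^T)$ and $\Ima(C_2^T)$ in direct sum. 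Since a finite intersection of nonempty Zariski open sets is nonempty, I would fix such a choice of $R,D_1,D_2$ once and for all.

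With this choice I would bound the kernel of $[A_1,A_2]^T = C_1^T B_2^T - C_2^T B_1^T$. If $x$ lies in this kernel then $C_1^T B_2^T x = C_2^T B_1^T x$, where the left-hand side lies in $\Ima(C_1^T)$ and the right-hand side in $\Ima(C_2^T)$; by the direct-sum property both vanish. Since $C_1^T$ and $C_2^T$ have full column rank $r-n$, this forces $B_1^T x = B_2^T x = 0$, i.e. $x \in \ker B^T$ where $B = (B_1 \mid B_2)$. Because $\Ima(B_1)$ and $\Ima(B_2)$ are in direct sum and each of dimension $r-n$, we have $\rk B = 2(r-n)$, so $\dim\ker B^T = n - 2(r-n)$ (here the hypothesis $r \le 3n/2$ is used, since $\Ima(B_1)\oplus\Ima(B_2) \subseteq K^n$ forces $2(r-n) \le n$). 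Hence $\rk[A_1,A_2] = \rk[A_1,A_2]^T \ge n - (n - 2(r-n)) = 2(r-n)$, and combining with the upper bound from Lemma~\ref{lem:2direct} gives $\dim\Ima[A_1,A_2] = 2(r-n)$.

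I do not expect a genuine obstacle: the argument is routine once the generic direct-sum statements are available. The only points that need care are the row/column bookkeeping — in particular, that $\rk C_i = r-n$ means $C_i^T$ has full column rank, which is exactly what lets one cancel $C_i^T$ above — and the observation that ``generic'' already provides a nonempty open set of admissible parameters, so no explicit construction is required. As a byproduct the same computation shows $\dim\Ima[A_1,A_2] = 2(r-n)$ holds generically in $(R,D_1,\dots,D_p)$, which is precisely assertion (i) of the Genericity Theorem in the case $p = 2$ and the form in which it will be invoked later.
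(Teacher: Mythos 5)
Your proposal is correct and follows essentially the same route as the paper: both reduce the claim to the generic direct-sum and full-rank properties of $B_1,B_2$ and $C_1^T,C_2^T$ from Proposition~\ref{prop:directsum} and the remark after it, and then compute the kernel of the commutator via the decomposition $[A_1,A_2]=B_2C_1-B_1C_2$. The only (immaterial) difference is that you bound $\ker[A_1,A_2]^T$ using the direct sum of the $\Ima(C_i^T)$ first, whereas the paper bounds $\ker[A_1,A_2]$ using the direct sum of the $\Ima(B_i)$ first; the two computations are transposes of one another.
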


\begin{proof}
Recall from~(\ref{eq:commutator}) that $[A_1,A_2]=B_2 C_1 - B_1 C_2$, where $C_1,C_2$ are the bottom left blocks in~(\ref{eq:block2}).
The conclusion of the theorem will hold  if we can choose $R,D_1,D_2$ so that:
\begin{itemize}
\item[(i)] $\Ima(B_1)$, $\Ima(B_2)$ 
are of dimension $r-n$  and in direct sum as in Proposition~\ref{prop:directsum}, 
\item[(ii)] $\Ima(C_1^T)$, $\Ima(C_2^T)$ are of dimension $r-n$, and in direct sum as in~(\ref{eq:direct2}).
\end{itemize}
Suppose indeed that $B_1,B_2,C_1,C_2$ satisfy these two properties. We need to prove that  
$\dim \ker [A_1,A_2] = n-2(r-n)=3n-2r$. 
But $$\ker (B_2 C_1 - B_1 C_2) = \ker(B_2C_1) \cap \ker(B_1C_2)=\ker(C_1) \cap \ker(C_2).$$ 
The first equality follows from the fact that $\Ima B_1$ is in direct sum with $\Ima B_2$, and the second one
from $\ker B_1 = \ker B_2 = \{0\}$. 

Moreover, $\ker(C_1) \cap \ker(C_2) = \ker(C)$ where $C^T$ is defined as
the block matrix $(C_1^T \, |\, C_2^T)$. By (ii), this matrix is of rank $2(r-n)$. Hence $\dim \ker C = n-2(r-n)$ as needed.
We have now proved that (i) and (ii) imply the conclusion of the theorem. Finally,  Proposition~\ref{prop:directsum} shows
 that (i) is satisfied for a generic choice of $R,D_1,D_2$, and we have observed in the remark before Theorem~\ref{th:rank2}
 that (ii) is generically satisfied as well. The conjunction of these two properties will therefore be satisfied for 
 a generic choice of $R,D_1,D_2$.
\end{proof}

\subsection{Two Commutators} \label{sec:2com}

Here we build on the results of Section~\ref{sec:1com} to show that $\Ima [A_1,A_2]+\Ima [A_1,A_3]$ is generically of dimension 
$3(r-n)$. For this, it is  of course necessary to assume that $r \leq 4n/3$. 
Let us begin with another corollary of Lemma~\ref{lem:polymethod}.
\begin{corollary} \label{cor:full2}
Suppose that $r \leq 4n/3$ and that $V \in M_{r,r-n}(K)$ contains at least 4 nonzero row-disjoint minors of size $r-n$.
There exist three diagonal matrices $D_1,D_2,D_3$ of size $r$ such that the block matrix 
$M = ( D_3V \, |\, D_2V \, |\,  D_1V \, |\, V)$ is of full column rank.
\end{corollary}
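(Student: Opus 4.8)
The plan is to reduce Corollary~\ref{cor:full2} to Lemma~\ref{lem:polymethod} in exactly the same way as Corollary~\ref{cor:full} was reduced to it, now using four row-disjoint minors instead of three. First I would observe that $M \in M_{r,4(r-n)}(K)$, so the inequality $r \leq 4n/3$ is precisely what guarantees $4(r-n) \leq r$, i.e., that $M$ does not have more columns than rows. Next, let $I,J,K,L \subseteq [r]$ be the pairwise-disjoint sets of row indices supporting the four nonzero minors of size $r-n$ of $V$, so that $V_I, V_J, V_K, V_L$ are each invertible. I would then make the block matrix $M$ upper-triangular after a suitable permutation of rows by choosing the diagonal matrices to be $0/1$ indicator matrices: set $D_3$ to have $1$'s exactly on the diagonal entries indexed by $L$, $D_2$ to have $1$'s exactly on those indexed by $K$, and $D_1$ to have $1$'s exactly on those indexed by $J$ (leaving the last block $V$ untouched, which "sees" the rows indexed by $I$).

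With this choice, after permuting the rows so that $L$ comes first, then $K$, then $J$, then $I$, the matrix $M = (D_3V \,|\, D_2V \,|\, D_1V \,|\, V)$ acquires a block-triangular structure whose four diagonal blocks are $V_L$, $V_K$, $V_J$, $V_I$ up to the zeroing-out induced by the $D_i$; each of these is invertible by hypothesis. Hence $\rk M = 4(r-n)$, and since $M$ has exactly $4(r-n)$ columns it is of full column rank. Alternatively — and this is the phrasing that mirrors the proof of Corollary~\ref{cor:full} — the same conclusion follows from three successive applications of Lemma~\ref{lem:polymethod}, starting from $W = V_{I \cup J \cup K \cup L}$ (peeling off one block of columns at a time), which is why invoking that lemma suffices.

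There is essentially no obstacle here: the statement is a routine generalization of Corollary~\ref{cor:full} from $3$ to $4$ diagonal factors, and the only point that requires the new hypothesis $r \leq 4n/3$ (rather than $r \leq 3n/2$) is the dimension count ensuring $M$ is "tall enough." The one thing to be slightly careful about is bookkeeping in the block-triangular argument — making sure the indicator matrices are assigned to the correct column blocks so that the triangular structure really is triangular and not merely "triangular after a further permutation" — but this is exactly the same bookkeeping already carried out in Lemma~\ref{lem:polymethod} and Corollary~\ref{cor:full}, and one may simply point to those proofs. As in the remark following Corollary~\ref{cor:full}, I would also note that although the $D_i$ produced here have only $r-n$ nonzero entries, once the conclusion holds for some choice of $D_1,D_2,D_3$ it holds for a generic (in particular, invertible) choice, by lower semi-continuity of rank; this is the form in which the corollary will actually be used in the sequel.
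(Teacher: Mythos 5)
Your proof is correct and follows exactly the route the paper intends: the paper omits the proof of Corollary~\ref{cor:full2} precisely because it is "essentially identical" to that of Corollary~\ref{cor:full}, and your argument (indicator diagonal matrices on the four disjoint index sets, block upper-triangular structure after a row permutation, the dimension count $4(r-n)\leq r$ from $r\leq 4n/3$, and the semi-continuity remark to upgrade the $D_i$ to invertible ones) is that proof, correctly generalized from three blocks to four.
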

The proof is omitted since it is essentially identical to that of Corollary~\ref{cor:full}.
\begin{proposition}  \label{prop:directsum2}
If $r \leq 4n/3$ one can choose $R \in GL_r(K)$ and three diagonal matrices $D_1,D_2,D_3$ of size $r$ 
so that the three top right blocks $B_1,B_2,B_3$ of
$Z_1,Z_2,Z_3$ in~(\ref{eq:block2})  are of rank $r-n$ and have their images in direct sum.
\end{proposition}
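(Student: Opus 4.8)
The plan is to transcribe the proof of Proposition~\ref{prop:directsum} essentially line for line, replacing the two-commutator input Corollary~\ref{cor:full} by its three-commutator analogue Corollary~\ref{cor:full2}. First I would take $D_1,D_2,D_3$ as supplied by Corollary~\ref{cor:full2}. To apply that corollary I need $V$ to contain at least four nonzero row-disjoint minors of size $r-n$; this is possible precisely because $r\le 4n/3$ forces $4(r-n)\le r$, so $[r]$ splits into four disjoint blocks of size $\ge r-n$, and because for a generic $R\in GL_r(K)$ all minors of $V$ are nonzero. Corollary~\ref{cor:full2} then gives that $M=(\,D_3V\mid D_2V\mid D_1V\mid V\,)$ has full column rank $4(r-n)$; equivalently, the four subspaces $\Ima(D_3V)$, $\Ima(D_2V)$, $\Ima(D_1V)$, $\Ima(V)$ are in direct sum, each of dimension $r-n$. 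By Lemma~\ref{lem:inverse}, $\Ima(V)=\ker(U)$.

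Next I would push this four-fold direct sum through $U$ to obtain the three-fold direct sum of the blocks $B_i=UD_iV$, exactly as in Proposition~\ref{prop:directsum}. Suppose $b_1+b_2+b_3=0$ with $b_i\in\Ima(B_i)$, say $b_i=UD_iVx_i$ for some $x_i\in K^{r-n}$. Then $w:=D_1Vx_1+D_2Vx_2+D_3Vx_3$ lies in $\ker(U)=\Ima(V)$, so $w=Vx_0$ for some $x_0$, and hence $D_1Vx_1+D_2Vx_2+D_3Vx_3-Vx_0=0$. The four-fold direct sum forces every summand to vanish; in particular $b_i=UD_iVx_i=0$ for each $i$, which says exactly that $\Ima(B_1),\Ima(B_2),\Ima(B_3)$ are in direct sum. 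To see each $B_i$ has rank $r-n$: if $B_ix=0$ then $D_iVx\in\ker(U)\cap\Ima(D_iV)=\{0\}$, so $D_iVx=0$, whence $x=0$ provided $D_i$ and $V$ are of full rank. As in the remark following Corollary~\ref{cor:full}, once the conclusion of Corollary~\ref{cor:full2} holds for some choice of the $D_i$ it also holds for some (indeed generic) choice of invertible $D_i$, by lower semicontinuity of rank; and $V$ is of full column rank for generic $R$. This produces the required $R,D_1,D_2,D_3$.

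I do not expect any genuine obstacle: the argument is a one-extra-summand version of the two-commutator case, the only new ingredient being Corollary~\ref{cor:full2}, whose proof is already subsumed by that of Corollary~\ref{cor:full}. The single point to state carefully is the counting $4(r-n)\le r$, equivalently $r\le 4n/3$, which is exactly where the hypothesis on $r$ enters and which is why the scheme stops working past $r=4n/3$. Finally, paralleling the remark after Proposition~\ref{prop:directsum}, I would note that the conclusion holds for a generic choice of $R,D_1,D_2,D_3$, and that applying the proposition to $Z_i^T=R^TD_iR^{-T}$ gives the analogous statement for $C_1^T,C_2^T,C_3^T$; this is what will be needed to conclude that $\dim\big(\Ima[A_1,A_2]+\Ima[A_1,A_3]\big)=3(r-n)$ generically.
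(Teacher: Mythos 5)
Your proposal is correct and follows essentially the same route as the paper: take $D_1,D_2,D_3$ from Corollary~\ref{cor:full2} (applicable since $r\le 4n/3$ gives $4(r-n)\le r$ and generic $R$ has all minors of $V$ nonzero), push the four-fold direct sum $\Ima(D_1V)\oplus\Ima(D_2V)\oplus\Ima(D_3V)\oplus\ker(U)$ through $U$ to get the direct sum of the $\Ima(B_i)$, and get $\rk B_i=r-n$ from invertible $D_i$ and full-rank $V$ via lower semicontinuity. Your one-shot dependence argument ($b_1+b_2+b_3=0$ forces each $D_iVx_i=0$) is, if anything, slightly cleaner than the paper's two-step version, but it is the same idea.
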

\begin{proof}
We follow the proof of Proposition~\ref{prop:directsum} and take $D_1,D_2,D_3$ as in Corollary~\ref{cor:full2}. 
As explained after that corollary, these matrices can be taken of rank~$r$ without loss of generality.
This implies $\rk(B_i)=r-n$ exactly as in the proof of Proposition~\ref{prop:directsum}.

Let us now take care of the direct sum property. Since $\Ima(V)=\ker(U)$, the 4  subspaces $\Ima(D_1V)$, $\Ima(D_2V)$, $\Ima(D_3V)$, $\ker(U)$
are  in direct sum. This implies that the 3 subspaces $\Ima(B_1)=\Ima(UD_1V)$, $\Ima(B_2)=\Ima(UD_2V)$,
$\Ima(B_3)=\Ima(UD_3V)$ are in direct sum.
Suppose indeed that $y=UD_1Vx_1=UD_2Vx_2+UD_3Vx_3$ for some $x_1,x_2,x_3 \in K^{r-n}$. 
Note that $x=D_1Vx_1-D_2Vx_2 -D_3Vx_3 \in \ker U$. Since $x \in \Ima(D_1V) \oplus \Ima(D_2V) \oplus \Ima(D_3V)$ and 
$\Ima(D_1V)$, $\Ima(D_2V)$, $\Ima(D_3V)$, $\ker(U)$
are  in direct sum, we conclude that $x=0$. 
Therefore, $z=D_1Vx_1=D_2Vx_2+D_3Vx_3 \in  \Ima(D_1V) \cap (\Ima(D_2V) \oplus \Ima(D_3V))$.
This implies $z=0$ since these 3 subspaces are in direct sum. Hence $y=Uz=0$, and the direct sum property in 
the conclusion of Proposition~\ref{prop:directsum2} is established.
\end{proof}
The remarks following Proposition~\ref{prop:directsum} also apply here: the conclusion of Proposition~\ref{prop:directsum2}
in fact holds true for a generic choice of $R,D_1,D_2,D_3$, and applies not only to $B_1,B_2,B_3$ but also to $C_1^T,C_2^T$, 
$C_3^T$.
\begin{theorem} \label{th:rank3}
If $r \leq 4n/3$ one can choose $R \in GL_r(K)$ and three diagonal matrices $D_1,D_2,D_3$ of size $r$ 
so that the three top-left blocks $A_1,A_2,A_3$ of $Z_1,Z_2,Z_3$ in~(\ref{eq:block2}) satisfy
 $\dim (\Ima [A_1,A_2] + \Ima [A_1,A_3])= 3(r-n)$.
\end{theorem}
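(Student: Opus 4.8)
The plan is to follow the proof of Theorem~\ref{th:rank2}, but to treat the two commutators simultaneously by means of a single factorization. Applying~(\ref{eq:commutator}) twice, $[A_1,A_2]=B_2C_1-B_1C_2$ and $[A_1,A_3]=B_3C_1-B_1C_3$, where $B_i,C_i$ are the off-diagonal blocks of $Z_i$ in~(\ref{eq:block2}). Putting these two $n\times n$ matrices side by side yields the factorization
\[
\bigl(\,[A_1,A_2]\ \,|\,\ [A_1,A_3]\,\bigr)
=\bigl(B_1\ \,|\,\ B_2\ \,|\,\ B_3\bigr)
\begin{pmatrix}
-C_2 & -C_3\\
C_1 & 0\\
0 & C_1
\end{pmatrix}
=:\mathcal{B}\,X ,
\]
with $\mathcal{B}\in M_{n,3(r-n)}(K)$ and $X\in M_{3(r-n),2n}(K)$, so that $\Ima[A_1,A_2]+\Ima[A_1,A_3]=\Ima(\mathcal{B}X)$. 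I would then choose $R,D_1,D_2,D_3$ so that $\mathcal{B}$ has full column rank and $X$ has full row rank; since $\mathcal{B}$ is then injective, $\dim\Ima(\mathcal{B}X)=\dim\bigl(\mathcal{B}(\Ima X)\bigr)=\rk X=3(r-n)$. This is dimensionally feasible: $3(r-n)\le n\le 2n$ when $r\le 4n/3$.

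For the full column rank of $\mathcal{B}$ I would invoke Proposition~\ref{prop:directsum2}: for a generic choice of $R,D_1,D_2,D_3$ the images $\Ima(B_1),\Ima(B_2),\Ima(B_3)$ are of dimension $r-n$ and in direct sum, which says exactly that the $3(r-n)$ columns of $\mathcal{B}$ are linearly independent. For the full row rank of $X$ I would use the transposed form of Proposition~\ref{prop:directsum2} recorded in the remark following it: applied to the transposed tuple $(Z_1^{T},Z_2^{T},Z_3^{T})$, which is of the same form with $R$ replaced by $R^{-T}$, it shows that generically $\Ima(C_1^{T}),\Ima(C_2^{T}),\Ima(C_3^{T})$ are of dimension $r-n$ and in direct sum; in particular each $C_i$ has rank $r-n$, hence trivial left kernel. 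A left null vector of $X$ is a triple of row vectors $(u_1,u_2,u_3)$ in $K^{1\times(r-n)}$ with $u_1C_2=u_2C_1$ and $u_1C_3=u_3C_1$; transposing the first relation gives $C_2^{T}u_1^{T}=C_1^{T}u_2^{T}\in\Ima(C_2^{T})\cap\Ima(C_1^{T})=\{0\}$, so $u_1C_2=0$ and $u_2C_1=0$, and likewise $u_1C_3=0$ and $u_3C_1=0$; the $C_i$ having trivial left kernel, $u_1=u_2=u_3=0$. Thus $X$ has full row rank.

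Both genericity requirements hold on nonempty Zariski-open subsets of the parameter space, and, $K$ being infinite, so does their intersection; any point of it provides $R,D_1,D_2,D_3$ with $\dim(\Ima[A_1,A_2]+\Ima[A_1,A_3])=3(r-n)$, and in fact this equality then holds generically. The one genuinely new point compared with Theorem~\ref{th:rank2} --- and the place to be careful --- is the realization that controlling the \emph{sum} of the two commutator images requires the direct-sum decomposition of the $C_i^{T}$ in addition to that of the $B_i$: the inclusion $\Ima[A_1,A_2]+\Ima[A_1,A_3]\subseteq\Ima(B_1)\oplus\Ima(B_2)\oplus\Ima(B_3)$ only yields the upper bound $3(r-n)$, and it is the factorization $\mathcal{B}X$ that converts the lower bound into a rank computation. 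Everything after that is routine linear algebra.
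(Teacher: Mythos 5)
Your proof is correct and rests on exactly the same two generic ingredients as the paper's: Proposition~\ref{prop:directsum2} for the direct sum of the $\Ima(B_i)$ and its transposed form (the remark following it) for the $\Ima(C_i^T)$. The only difference is cosmetic — you conclude via the single block factorization $\mathcal{B}X$ and a left-kernel computation, whereas the paper concludes by invoking the proof of Theorem~\ref{th:rank2} to get $\rk[A_1,A_2]=\rk[A_1,A_3]=2(r-n)$ and then Lemma~\ref{lem:2direct} to obtain $\Ima[A_1,A_2]+\Ima[A_1,A_3]=\Ima(B_1)\oplus\Ima(B_2)\oplus\Ima(B_3)$.
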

\begin{proof}
The conclusion of Theorem~\ref{th:rank3} will hold  if we can choose $R,D_1,D_2,D_3$ so that:
\begin{itemize}
\item[(i)] $\Ima(B_1)$, $\Ima(B_2)$, $\Ima(B_3)$
are of dimension $r-n$  and in direct sum as in Proposition~\ref{prop:directsum2}, 
\item[(ii)] $\Ima(C_1^T)$, $\Ima(C_2^T)$,  $\Ima(C_3^T)$ are of dimension $r-n$ and in direct sum. 
\end{itemize}
Suppose indeed that these two properties are satisfied.
It follows from the proof of Theorem~\ref{th:rank2} that $\rk [A_1,A_2] = \rk [A_1,A_3]=2(r-n)$.
As already pointed out in Lemma~\ref{lem:2direct}, this implies  $\Ima [A_1,A_2] = \Ima(B_1) \oplus \Ima(B_2)$ 
and $\Ima [A_1,A_3] = \Ima(B_1) \oplus \Ima(B_3)$. 
As a result, $\Ima [A_1,A_2] + \Ima [A_1,A_3] =  \Ima(B_1) \oplus \Ima(B_2) \oplus  \Ima(B_3)$. This subspace
is of dimension $3(r-n)$ as needed. Finally, we observe that (i) and (ii) hold generically by Proposition~\ref{prop:directsum2}
and the remark following it.
\end{proof}

\subsection{Proof of the genericity theorem} \label{sec:genproof}

 Theorem~\ref{th:generic} is a fairly straightforward consequence of Theorem~\ref{th:rank2} and Theorem~\ref{th:rank3}.
 The proofs of these theorems show that their conclusions do not only hold for one particular choice of $R,D_1,D_2,D_3$,
 but hold generically. This follows also just from the {\em statements} of these theorems and from lower semi-continuity of
 matrix rank (see Remarks~\ref{rem:construct} and~\ref{rem:cont}).
 Moreover, we can apply these theorems not only to $A_1,A_2,A_3$ but to any pair and triple of matrices in $(A_1,\ldots,A_p)$.
 Each of the properties $\dim \Ima  [A_k,A_l] = 2(r-n)$, $\dim (\Ima  [A_k,A_l] + \Ima [A_k,A_l] ) = 3(r-n)$ therefore holds 
 generically, and their conjunction holds generically as well.

{\small
\section*{Acknowledgments}
The commuting extension in~(\ref{eq:2n}) was communicated by Jeroen Zuiddam.

}

\end{document}